\newcommand{\set}[1]{\ensuremath{\{#1\}}}
\newcommand{\setwith}[2]{\ensuremath{\set{{#1}\,\mid\,{#2}}}}
\newcommand{\group}[1]{\ensuremath{\!\left(#1\right)}}
\newcommand{\fundef}[3]{\ensuremath{{#1}\colon{#2}\to{#3}}}
\newcommand{\arc}[2]{\ensuremath{(#1, #2)}}
\newcommand{\edge}[2]{\ensuremath{\set{#1, #2}}}
\newcommand{\inarc}[1]{\ensuremath{\delta^-({#1})}}
\newcommand{\inarcX}[1]{\ensuremath{\tilde\delta^-({#1})}}
\newcommand{\Loss}{\ensuremath{\textit{Loss}}}
\newcommand{\win}{\ensuremath{\textit{win}}}
\newcommand{\save}{\ensuremath{\textit{save}}}
\newcommand{\SteinLib}{\textsc{SteinLib}\xspace}
\newcommand{\bigO}[1]{\ensuremath{\mathcal{O}\group{#1}}}
\newcommand{\Real}{\ensuremath{\mathbb{R}}}
\newcommand{\RealNonNegative}{\ensuremath{\Real_{\geq 0}}}
\newcommand{\eg}{e.g.,  }
\newcommand{\ie}{i.e.,  }
\newcommand{\wrt}{w.r.t.\xspace}
\newcommand{\complexityclass}[1]{\textsf{#1}}
\newcommand{\cP}{\complexityclass{P}\xspace}
\newcommand{\cNP}{\complexityclass{NP}\xspace}
\newcommand{\cMAXSNP}{\complexityclass{MAX\,SNP}\xspace}
\newcommand{\op}[1]{\textrm{#1}}
\newcommand{\MST}[1]{\ensuremath{\op{MST}(#1)}}
\newcommand{\Vor}[1]{\ensuremath{\mathcal{V}(#1)}}
\newcommand{\algo}[1]{\textsf{#1}}
\newcommand{\linprog}[3]{%
 \begin{subequations}
  \begin{alignat}{2}%
   \text{#1} \quad & #2 \\%
   \text{s.\,t.} \quad #3%
  \end{alignat}
 \end{subequations}
}
\newcommand{\constraint}[4]{ & #1 \, #2 \, #3 , & \quad & \text{#4} }
\undefined\newcommand{\todo}[1]{\xspace{\bfseries\sffamily\textcolor{red}{[{#1}]}}\xspace}\fi
\newtheorem{theorem}{Theorem}
\newtheorem{definition}[theorem]{Definition}
\newtheorem{observation}[theorem]{Observation}
\newtheorem{lemma}[theorem]{Lemma}
\newcommand{\tbl}[2]{%
  \caption{#1}

  \centering
  #2
}
\title{Strong Steiner Tree Approximations in Practice}
\author{Stephan Beyer and Markus Chimani}
\affil{\small Institute of Computer Science, Osnabrück University\\
 \texttt{\{stephan.beyer,markus.chimani\}@uni-osnabrueck.de}}
\date{}
\begin{document}

\maketitle

\begin{abstract}
 In this experimental study
  we consider Steiner tree approximation algorithms
  that guarantee a constant approximation ratio smaller than~$2$.
 The considered greedy algorithms
  and approaches based on linear programming
  involve the incorporation
  of $k$-restricted full components for some $k \geq 3$.
 For most of the algorithms,
  their strongest theoretical approximation bounds
  are only achieved for $k \to \infty$.
 However,
  the running time is also exponentially dependent on $k$,
  so only small $k$ are tractable in practice.

 We investigate different implementation aspects
  and parameter choices
  that finally allow us to construct algorithms
  (somewhat)
  feasible for practical use.
 We compare the algorithms
  against each other,
  to an exact LP-based algorithm,
  and to fast and simple $2$-approximations.
\end{abstract}

\unmarkedfntext{
 Funded by the German Research Foundation~(DFG), project number CH 897/1-1.\\
 Preliminary versions of this work appeared as
  \citep{CW11}
  and
  \citep{BC14}.
}

\section{Introduction}

The Steiner tree problem,
 essentially asking for the cheapest connection of points
  in a metric space,
 is a fundamental problem in
 computer science and operations research.
In the general setting,
 we are given a connected graph $G = (V, E)$
 with edge costs $\fundef{d}{E}{\RealNonNegative}$
 and a subset $R \subseteq V$ of nodes.
Those \emph{required} nodes $R$ are called \emph{terminals},
 and $V \setminus R$ are called \emph{nonterminals}.
A terminal-spanning subtree in $G$
 is called a \emph{Steiner tree}.
The \emph{minimum Steiner tree problem in graphs~(STP)}
 is to find a Steiner tree $T = (V_T, E_T)$ in $G$
 with $R \subseteq V_T \subseteq V$
 and
 minimizing the cost $d(T) := d(E_T) := \sum_{e \in E_T}{d(e)}$.

As one of the hard problems identified by \citet{K72},
 the STP is even \cNP-hard for special cases like
 bipartite graphs with uniform costs~\citep{HRW92}.
\citet{PY91} proved that the STP is in \cMAXSNP,
 \citet{BP89} showed that this is even the case
  if edge costs are limited to $1$ and $2$.
No problem in \cMAXSNP has a polynomial-time approximation scheme~(PTAS)
 if $\cP \neq \cNP$,
 that is,
 it cannot be approximated arbitrarily close to ratio $1$
 in polynomial time
 under widespread assumptions.
The best known lower bound for an approximation ratio is
 $96/95 \approx 1.0105$~\citep{CC08}.

The STP has various applications
 in the fields of
  VLSI design,
  routing,
  network design,
  computational biology,
  and
  computer-aided design.
It serves as a basis for generalized problems like
 prize-collecting
 and stochastic Steiner trees,
 Steiner forests,
 Survivable Network Design problems,
 discount-augmented problems like Buy-at-Bulk or Rent-or-Buy,
 and appears as a subproblem in problems like the Steiner packing problem.

The versatile applicability of the STP
 gave rise to a lot of research from virtually all algorithmic points of view:
  heuristics,
  metaheuristics,
  and
  approximation algorithms
  \citep{R83,KR92,GN12,%
   ARUW01,LLLR14,%
   TM80,KMB81,M88,Z92,Z93:IPL,Z93:AI,BR94,GW95,Z95,KZ95,PS97,HP99,RZ05,BGRS13,GORZ12%
  }
 on the one side
 and exact algorithms based on
  branch-and-bound,
  dynamic programming,
  fixed-parameter tractability,
  and integer linear programs
  \citep{%
   SFG82,%
   DW72,HSV14,V11,%
   CMZ12,FBN13,%
   A80,W84,PV01%
  }
  on the other side.
Both areas are complemented by research about reduction techniques on the problem \citep{DV89,PV02}.
However, by far not all of that research is backed by experimental studies.
This paper
 attempts to close this gap in the field of strong approximation algorithms
 where there has only been the preliminary conference papers leading to this article \citeyearpar{CW11,BC14}, and the work of \citet{CGSW14}
 during the \emph{11th DIMACS Implementation Challenge} \citeyearpar{DIMACS14Bounds}.
By \emph{strong} approximations we denote
  approximations with ratio smaller than $2$.
Although those algorithms
 have been a breakthrough in theory,
 their actual practicability has remained unclear.
We contribute
 by implementing, extending, evaluating, and comparing
 the different strong algorithms
 and a variety of algorithmic variants thereof.
We also compare them to simple $2$-approximations and exact algorithms.

In the following section
 we first give an overview on the basic ideas behind strong algorithms
 and their evolution.
We then describe the two known classes of algorithms,
 greedy combinatorial algorithms
 and LP-based algorithms,
 in more detail.
We only give as many details as
 necessary to understand
 our subsequent design choices
 and algorithm variants.
Section~\ref{sec:alg-eng}
 is about different practical variants
 for the algorithms.
In Section~\ref{sec:exp-eval}
 we evaluate the algorithms and their variants,
 and compare their running times and solution qualities
 to basic $2$-approximations and an exact algorithm.

\section{The Algorithms}
\label{sec:algorithms}

For any graph $H$,
 we denote
  its nodes by $V_H$,
  its edges by $E_H$,
  and its terminals by $R_H$.
When referring to the input graph $G$,
 we omit the subscript.
By $\MST{H}$ we denote a minimum spanning tree in $H$.
Let $\bar{G}$ be the metric closure of $G$,
 that is,
 the complete graph on $V$
 such that the cost of each edge $\edge{u}{v}$
  is the minimum cost of any $u$-$v$-path in $G$.
Let $\bar{G}_U$ denote the $U$-induced subgraph of $\bar{G}$ for $U \subseteq V$.
For any graph $G$ and any subgraph $H \subseteq G$,
 we denote by $G/H$
 the result of contracting $H$ into a single node in $G$.

We first give an overview on purely combinatorial approximation algorithms
 for the STP describing the basic ideas coarsely.
In Section~\ref{sec:GCF} we provide a more detailed description
 of some of the formerly strongest approximation algorithms.
Section~\ref{sec:LP} is about recent approximation algorithms
 that are based on linear programming techniques.

The simplest algorithms are basic $2$-approximations.
The algorithm by \citet{TM80}
 can be compared to the Jarn\'ik-Prim algorithm
 to find minimum spanning trees.
In each iteration, a shortest path to an unvisited terminal
 (instead of a single edge to an unvisited node) is added.
That is, the algorithm builds the Steiner tree
 starting with a single terminal node
 and iteratively adds the shortest path to the nearest terminal to the tree.
The algorithm by \citet{KMB81}
 computes $\bar{G}_R$ and $\MST{\bar{G}_R}$.
After replacing the edges of $\MST{\bar{G}_R}$ by the corresponding shortest paths in $G$,
 and cleaning up the obtained graph
 (\ie
  breaking possible cycles
  and pruning Steiner leaves),
 we obtain a Steiner tree that is a $2$-approximation.
\citet{M88} suggests a more time-efficient variant that
 exploits the use of Voronoi regions.

\begin{figure}
 \centering
 \subfigure[The instance.]{\scalebox{0.9}{%
  \includegraphics{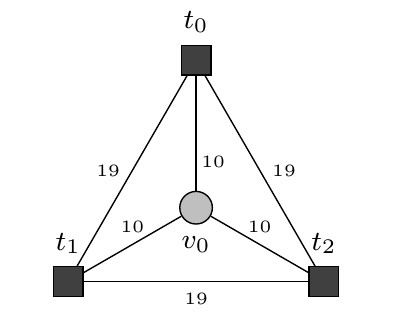}
 }}
 \subfigure[Possible result of $2$-approximations.]{\scalebox{0.9}{%
  \includegraphics{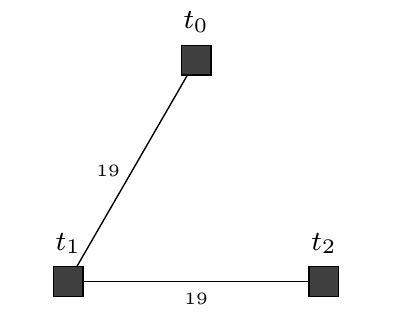}
 }}
 \subfigure[Improved approximation (and optimum solution).]{\scalebox{0.9}{%
  \includegraphics{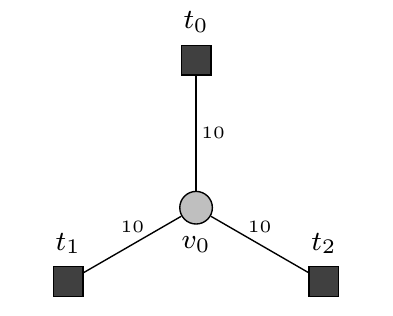}
 }}
 \caption{%
  A simple example that shows how converting a well-chosen nonterminal (here $v_0$)
  to a terminal can improve a $2$-approximation.
 Square nodes are terminals, circular nodes are nonterminals.
 }
 \label{fig:simple-improvement}
\end{figure}

Assume we want to improve a Steiner tree $T$
 that is obtained by a $2$-approximation.
One idea
 is to find nonterminals
 that are not already included in $T$
 but whose inclusion would improve $T$.
Hence, by temporarily converting these nonterminals to terminals
 and applying a $2$-approximation on the new instance,
 the result can be better.
Fig.~\ref{fig:simple-improvement} shows such a case.
The crucial ingredient in this idea is the choice
 of the nonterminals.
Zelikovsky~\citeyearpar{Z92,Z93:IPL,Z93:AI} gives an approach
 that guarantees an approximation ratio of $11/6$.
For every choice of three terminals,
 his algorithms find a nonterminal to be chosen
 as the center of a star
 where the three terminals are the leaves.
Among all these stars,
 the ``best'' ones%
 ---according to some greedy criterion function---%
 are chosen for the Steiner tree to be constructed.

From another point of view,
 this approach exploits the decomposition of a Steiner tree
 into \emph{full components},
 a concept already mentioned by \citet{GP68}.
A Steiner tree is \emph{full} if its set of leaves coincides with its set of terminals.
Any Steiner tree can be uniquely decomposed into full components
 by splitting up inner terminals.
We say a \emph{$k$-restricted component} is a full component with at most $k$ leaves
 and a \emph{$k$-component} is a full component with exactly $k$ leaves.
A \emph{$k$-restricted Steiner tree} is a Steiner tree
 where each component is $k$-restricted.
In these terms,
 the algorithm by \citet{KMB81}
 is a Steiner tree construction using $2$-components,
 and the mentioned algorithms by Zelikovsky use $3$-restricted components.

All strong algorithms for the STP
 known so far
 exploit the decomposition of a Steiner tree into full components
 by first constructing a set of $k$-restricted components
 and then putting the full components together to obtain a $k$-restricted Steiner tree.
Interestingly,
 the cost ratio between a minimum $k$-restricted Steiner tree and a minimum Steiner tree
 is (tightly) bounded by
 $\varrho_k := 1 + \frac{2^r}{(r-1) \, 2^r + k}$
 with $r = \lfloor \log_2{k} \rfloor$
  \cite{BD95}.\footnote{%
  Note that $k$-restricted components are always constructed based on $\bar{G}$.
  If they were based on $G$,
   a $k$-restricted component may not even exist, and, if it exists, the ratio is unbounded.
 }
Note that $\varrho_2 = 2$
 is also the approximation ratio for approximations based on $2$-restricted Steiner trees
 since the minimum $2$-restricted Steiner tree of a graph corresponds to
 the back-transformation of $\MST{\bar{G}_R}$.
However, $\varrho_k$ for $k \geq 3$
 cannot simply serve as an approximation ratio:
 it is not known whether a polynomial-time algorithm for $k=3$ exists.
However, there is a PTAS for this case \citep{PS97},
 so it is possible to approximate arbitrarily close to ratio $\varrho_3 = 5/3$.
Obtaining a minimum $k$-restricted Steiner tree
 for $k \geq 4$ is strongly \cNP-hard,
 as follows from a trivial reduction from \textsc{Exact Cover by $r$-Sets} with $r = k - 1$.

With respect to $\varrho_k$,
 Zelikovsky's approach yields an approximation ratio of $\frac{\varrho_2 + \varrho_3}{2} = \frac{11}{6}$.
\citet{BR94}
 were the first to generalize this approach to arbitrary $k$
 by using rather complicated preselection and construction phases.
They obtain a ratio of $\varrho_2 - \sum_{i=3}^{k}\frac{\varrho_{i-1} - \varrho_i}{i-1} \geq 1.7333$
 but, in particular, $11/6 \approx 1.8333$ for $k = 3$
 and $16/9 \approx 1.7778$ for $k = 4$.
\citet{Z95} generalizes his former approach using another greedy selection criterion
 (the \emph{relative greedy heuristic})
 and obtains an approximation ratio of $(1 + \ln{\frac{\varrho_2}{\varrho_k}}) \varrho_k \approx (1.693 - \ln\varrho_k) \varrho_k$
 which becomes approximately $1.693$ for $k \to \infty$ since $\varrho_k$ tends to~$1$.
However, the proven approximation ratios for $k = 3, 4$ are only about $1.97$ and $1.93$, respectively.

\citet{KZ95} introduce the notion of the \emph{loss}
 of a full component
 to allow some more sophisticated preprocessing.
They utilize it to prove small improvements for
 the Berman-Ramaiyer algorithm with $k = 4$ (from $1.778$ to $1.757$)
 and for the relative greedy heuristic with $k \to \infty$ (from $1.693$ to $1.644$).
\citet{HP99}
 use the idea of Karpinsky and Zelikovsky
 in an iterated manner.
They incorporate the loss of a full component with some well-chosen weight
 into the relative greedy heuristic
 and solve it to obtain a Steiner tree.
In each iteration, the weight is decreased
 and the modified relative greedy heuristic is run again.
The optimal sequence of weights can be found using numerical optimization.
For 11~iterations and $k \to \infty$,
 they obtain an approximation ratio of $1.598$.

\citet{PS97}
 use algebraic techniques to attack the problem of obtaining a minimum $3$-restricted Steiner tree.
They obtain a randomized fully polynomial-time $(\varrho_3 + \varepsilon)$-approximation scheme,
 however,
 with a sequential time complexity of $\bigO{\frac{\log(1/\varepsilon)}{\varepsilon} n^{11 + \omega} \log{n}}$
 where $\omega$ is the exponent of matrix multiplication.

The so far best purely combinatorial approximation algorithm is the \emph{loss-contracting algorithm} by
 \citet{RZ05}.
The obtained approximation ratio is $(1 + \frac{1}{2}\ln(\frac{4}{\varrho_k} - 1)) \varrho_k$
 which tends to $1.549$ for $k \to \infty$
 and is $1.947$ and $1.883$ for $k = 3,4$, respectively.
We will describe it in more detail in the following section,
 before discussing the even stronger LP-based algorithms in Section~\ref{sec:LP}.

\subsection{Greedy Contraction Framework}
\label{sec:GCF}

A lot of the strong algorithms
 are based on the contraction of full components.
The idea behind all these algorithms is basically the same.
It was first summarized by \citet{Z95}
 and called the \emph{greedy contraction framework}~(GCF).

We are (implicitly or explicitly) given a list $\mathcal{C}_k$ of $k$-restricted components
 and a \emph{win function} $\win_f$
 that characterizes the benefit of choosing a full component in the final Steiner tree.
Its value for a specific full component $C$ is called \emph{win},
 and we call $C$ \emph{promising}
 if choosing $C$ guarantees an improvement.
The GCF begins by computing the metric closure $M := \bar{G}_R$ over
 the terminals $R$ in $G$.
Recall that
 deducing a Steiner tree from $\MST{\bar{G}_R}$
 yields a $2$-approximation.
Iteratively,
 the GCF finds a full component $C \in \mathcal{C}_k$
  that maximizes $\win_f(M, C)$,
 and contracts $C$ in $M$
 if this win is promising.
This process is repeated as long as there are full components with promising wins.

Each time a full component is contracted,
 it is incorporated into the final Steiner tree.
This can be done
 by
  converting the nonterminals of the chosen full components into terminal nodes
  and computing a $2$-approximation using the new terminal node set.
 Alternatively,
  we can start with an empty graph $T$,
  inserting each chosen full component into $T$,
  and finally returning $\MST{T}$
  to clean up cycles that may have arisen
  (see \citep{Z95,RZ05}).

The \emph{loss-contracting algorithm}~(LCA) by \citet{RZ05}
 is a variant of GCF
 with a small difference:
 not the whole full component is contracted
 but only its loss.
\begin{definition}[core edges, loss]
 \label{def:loss-core}
 We denote as \emph{core edges}
  of $C$ a minimal subset of $E_C$
  whose removal disconnects all terminals in $C$.
 The \emph{loss}
  $\Loss(C)$
  of a full component $C$
  is the minimum-cost subforest of $C$
  such that all inner nodes are connected to leaves.
\end{definition}
A full component with $k$ leaves has exactly $k-1$ core edges.
Note that the definition of core edges
 does not involve edge costs.
The complement of any spanning tree in $C/R_C$
 forms a set of core edges.
$\Loss(C)$, however, is a \emph{minimum} spanning tree in $C/R_C$.
Overall, the set of non-loss edges (the complement of $\Loss(C)$)
 is one possible core edge set
 (but not the only one).
This distinction will become relevant
 for the randomized algorithm described in Section~\ref{sec:LP}.
When computing $\Loss(C)$,
 it is sufficient to insert zero-cost edges between all terminals $R_C$ into $C$
 instead of considering the $C/R_C$ contraction,
 see \citep[Lemma 2]{RZ05}.

The idea behind the LCA in contrast to the GCF
 is to leave out high-cost edges
 as long as they are not necessary to connect the solution.
From another point of view,
 this allows the algorithm
 to reject edges in a full component after a full component has already been accepted for inclusion.

In order to be able to perform a loss-contraction,
 the full component has to be included into $M$ first.
Hence $M/\Loss(C)$ is short-hand for $(M \cup C)/\Loss(C)$.
Since only the MSTs of the contracted graphs are necessary,
 we can perform a contraction by adding zero-cost edges between contracted terminals,
 and a loss-contraction by adding the non-loss edges.
Figure~\ref{fig:contraction-vs-loss-contraction}
 illustrates the difference of a contraction and a loss-contraction.

\begin{figure}
 \centering
 \subfigure[$\MST{M}$]{%
  \includegraphics{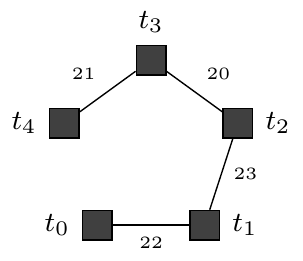}
 }
 \subfigure[Full component $C$ to be selected. Dashed lines denote loss edges.]{%
  \includegraphics{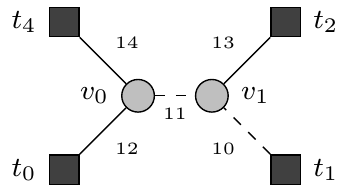}
 }\\
 \subfigure[$\MST{M/C}$]{%
  \includegraphics{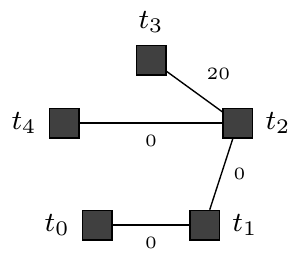}
 }
 \subfigure[$\MST{M/\Loss(C)}$]{%
  \includegraphics{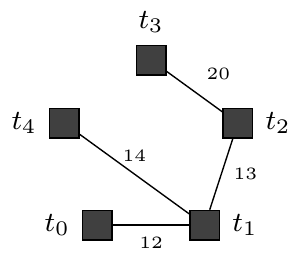}
 }
 \caption{An example showing the difference between a contraction (using zero-cost edges) and a loss-contraction of a full component.}
 \label{fig:contraction-vs-loss-contraction}
\end{figure}

Among the purely combinatorial algorithms,
 the GCF and its variant LCA are the ones
 we will focus on.
We refrain from explicitly implementing
 the mentioned GCF variants
 involving iterative and preprocessing techniques
 as they are either impractical~\citep{HP99}
 or dominated by other methods~\citep{BR94,KZ95}.
Also the algebraic approach~\citep{PS97} for the $3$-restricted case
 is clearly impractical.

\paragraph{Win functions.}
One crucial ingredient of many win functions is the \emph{save}.
When inserting a full component $C$ into $\MST{M}$,
 we obtain cycles.
Those cycles can be broken by deleting the maximum-cost edges in these cycles.
We call those edges \emph{save edges},
 and their total cost is the \emph{save}.
Formally,
 let $\save_M(u,v)$ be the maximum-cost edge
 on the unique path between $u,v \in \MST{M}$,
 and
 let
 $\save(M,C) := d(\MST{M}) - d(\MST{M/C})$
 denote the cost difference
 between the minimum spanning trees
 in $M$ and $M/C$.

Several win functions have been proposed.
Zelikovsky
 originally suggested the \emph{absolute} win function
  $\win_\text{abs}(M,C) := \save(M,C) - d(C)$
  that describes the actual cost reduction of $M$
  when we include $C$.
Using $\win_\text{abs}$ yields an $11/6$-approximation
 for $k=3$~\citep{Z92,Z93:IPL,Z93:AI}.
The \emph{relative} win function
 $\win_\text{rel}(M,C) := \frac{\save(M,C)}{d(C)}$
 achieves approximation ratio $1.69$
  for $k \to \infty$~\citep{Z95}.

For the LCA,
 \citet{RZ05}
 proposed
 $\win_\text{loss}(M,C) := \frac{\win_\text{abs}(M,C)}{d(\Loss(C))}$.
It relates the cost reduction from the choice of a full component
 to the cost of connecting their nonterminals to terminals,
 which is the actual cost when contracting the loss of a full component.
In their survey,
 \citet{GHNP01}
 used
 $\win_{\text{loss}'}(M,C) := \frac{\save(M,\Loss(C))}{d(\Loss(C))}$
 that coincides with $\win_\text{loss}(M,C) + 1$
  since $d(\MST{M/\Loss(C)}) = d(\MST{M/C}) + d(C) - d(\Loss(C))$.
We see that $\win_\text{loss}$ is conceptually a direct transfer of $\win_\text{rel}$ to the case of loss contractions.
It guarantees an $\approx 1.549$ approximation ratio for $k \to \infty$.

The GCF loop terminates when no promising full component has been found,
 that is,
 when the choice of a full component $C$ with maximum win does not improve $M$.
This is the case if
 $\win_\text{abs}(M,C) \leq 0$ (and hence $\win_\text{loss}(M,C) \leq 0$)
 or
 $\win_\text{rel}(M,C) \leq 1$
  for all $C \in \mathcal{C}_k$.

\subsection{Algorithms Based on Linear Programming}
\label{sec:LP}

In contrast to the purely combinatorial algorithms above,
 there are also approximation algorithms
 based on linear programming.

The primal-dual algorithm by \citet{GW95}
 for constrained forest problems can be applied
 to the STP
 but only yields a $2$-approximation.
It is based on the \emph{undirected cut relaxation}~(UCR)
 with a tight integrality gap of $2$.
We obtain the \emph{bidirected cut relaxation}~(BCR)
 by transforming $G$ into a bidirected graph.
Let $A := \setwith{\arc{u}{v}, \arc{v}{u}}{e = \edge{u}{v} \in E}$ denote the arc set of $G$,
 and let $\inarc{U} := \setwith{\arc{u}{v} \in A}{u \in U\setminus{V}, v \in U}$
  be the set of arcs entering $U \subseteq V$.
The cost of each arc coincides with the cost of the corresponding edge.
BCR is defined as:
  \linprog{min}{\tag{BCR}\sum_{e \in A}{d(e)\,x_e}}{%
   \constraint{\sum_{e \in \inarc{U}}{x_e}}{\geq}{1}{for all $U \subsetneq V$ with $r \in U \cap R \neq R$,}
    \label{eq:bcr:dcut} \\
   & 0 \leq x_e \leq 1\text{,} & & \text{for all $e \in A$}
  }
 where $r \in R$ is an arbitrary (fixed) \emph{root} terminal.
We obtain the ILP of BCR by requiring integrality for $x$
 (and analogously for the relaxations below).
Clearly, every feasible solution of the ILP
 spans all terminals:
 the \emph{directed cut constraint}~\eqref{eq:bcr:dcut}
  guarantees that there is at least one directed path from any terminal to $r$.
Since every optimal solution represents a tree where all arcs are directed towards $r$,
 dropping the directions of that arborescence yields a minimum-cost Steiner tree.
Although BCR is strictly stronger than UCR,
 no BCR-based approximation with ratio smaller than $2$ is known.

\citet{BGRS13}
 incorporate the idea of using $k$-restricted components
 to find the \emph{directed-component cut relaxation}~(DCR).
They prove an upper bound of the integrality gap of $1 + \frac{\ln{3}}{2} \approx 1.55$ for $k \to \infty$
 and obtain an approximation algorithm with approximation ratio at most $\varrho_k \ln{4}$
 that tends to $\approx 1.39$ for $k \to \infty$.
Let $\mathcal{D}_k$ denote the set of directed full components
 obtained from $\mathcal{C}_k$:
  For each $C \in \mathcal{C}_k$ with $R_C = \set{v_1, \ldots, v_{|R_C|}}$,
   we make $|R_C|$ copies of $C$
   and direct all edges in the $i$-th copy of $C$
  towards $v_i$
  for $i = 1, \ldots, |R_C|$.
For each $D \in \mathcal{D}_k$ let $t_D$ be the node all edges are directed to.
Let $\inarcX{U} := \setwith{D \in \mathcal{D}_k}{U \setminus R_D \neq \emptyset, t_D \in U}$ be the set of directed full components
 entering $U \subseteq R$.
DCR is defined as:
 \linprog{min}{\tag{DCR}\sum_{D \in \mathcal{D}_k}{d(D) \, x_D}}{%
  \constraint{\sum_{D \in \inarcX{U}}{x_D}}{\geq}{1}{for all $U \subsetneq R$ with $r \in U$,}
   \label{eq:dcr:cut} \\
   & 0 \leq x_D \leq 1\text{,} & & \text{for all $D \in \mathcal{D}_k$}
 }
 where $r \in R$ is again an arbitrary root.
The approximation algorithm
 iteratively solves DCR,
 samples a full component $D$ according to a probability distribution based on the solution vector,
 contracts $D$,
 and iterates this process by resolving the new DCR instance.
The algorithm stops when all terminals are contracted.
The union of the chosen full components represents the resulting $k$-restricted Steiner tree.
Although the sampling of the full components is originally randomized,
 a derandomization of the algorithm is possible.

\citet{W98} showed that constructing a minimum $k$-restricted Steiner tree
 is equivalent of finding
 a minimum spanning tree in the hypergraph
 $(R, \setwith{R_C}{C \in \mathcal{C}_k})$,
 \ie
 the terminals represent the nodes of the hypergraph
 and the full components represent the hyperedges.
He introduced the following relaxation:
 \linprog{min}{\tag{SER}\sum_{C \in \mathcal{C}_k}{d(C) \, x_C}}{%
  \constraint{%
   \sum_{C \in \mathcal{C}_k}{(|R_C| - 1) \, x_C}
   }{=}{|R| - 1}{}
  \label{eq:ser:full} \\
  \constraint{%
   \sum_{\substack{%
     C \in \mathcal{C}_k \\
     R' \cap R_C \neq \emptyset
    }}{(|R' \cap R_C| - 1) \, x_C}
   }{\leq}{|R'| - 1}{for all $R' \subseteq R, |R'| \geq 2$,}
  \label{eq:ser:subset} \\
   & 0 \leq x_C \leq 1\text{,} & & \text{for all $C \in \mathcal{C}_k$.}
}
Constraint~\eqref{eq:ser:full}
 represents the basic relation between the number of nodes and edges in hypertrees
 (like $|E| = |V|-1$ in trees).
Since arbitrary subsets of nodes are not necessarily connected but cycle-free,
 this directly implies the \emph{subtour elimination constraints}~\eqref{eq:ser:subset}.
We call that relaxation the \emph{subtour elimination relaxation}~(SER).

\citet{PV03}
 proved that DCR and SER
 are equivalent.
\citet{KPT11}
 and \citet{CKP10:IPCO} also provided partition-based relaxations
 that are equivalent to DCR and SER.
All these equivalent LP relaxations are summarized as \emph{hypergraphic relaxations}.

\citet{CKP10}
 developed an $1.55$-approximation algorithm for $k \to \infty$
 based on SER
 to prove the integrality gap of DCR in a simpler way than \citet{BGRS13}.
Their algorithm has the advantage that it only solves the LP relaxation once
 instead of solving new LP relaxations after each single contraction.
This improves the running times.
The disadvantage is that the approximation ratio is not better than
 the purely combinatorial algorithm by \citet{RZ05}.

\citet{GORZ12} used techniques from the theory of matroids and submodular functions
 to improve the upper bound on the integrality gap of the hypergraphic relaxations
 such that it matches the ratio $1.39$ of the approximation algorithm by \citet{BGRS13}.
They found a new approximation algorithm
 that solves the hypergraphic relaxation once
 and builds an auxiliary directed graph from the solution.
Full components in that auxiliary graph are carefully selected and contracted,
 until the auxiliary graph cannot be contracted any further.

We will focus on this latter algorithm
 and describe it in the remaining section.
Although the description of the algorithm by \citet{BGRS13} is quite simple,
 we have not chosen to implement it.
It is evident that it needs much more running time
 since the LP relaxation has to be re-solved in each iteration.
To this end,
 a lot of max-flows have to be computed on auxiliary graphs.
In contrast, the algorithm by \citet{GORZ12}
 only solves one LP relaxation
 and then computes some min-cost flows on a shrinking auxiliary graph.

\paragraph{Solving the LP relaxation.}
\label{sec:goemans-alg}

First, we have to solve the hypergraphic LP relaxation.
The number of constraints in both above relaxations
 grows exponentially with the number of terminals,
 but both relaxations can be solved in polynomial time
 using \emph{separation}:
  We first solve the LP for a subset of the constraints.
  Then, we
   solve the \emph{separation problem}, \ie search for some further violated constraints,
   add these constraints,
   resolve the LP,
   and iterate the process until there are no further violated constraints.
An LP relaxation with exponentially many constraints can be solved in polynomial time
 iff its separation problem can be solved in polynomial~time.

The separation problem of DCR includes a typical cut separation.
We generate an auxiliary directed graph
 with nodes $R$.
For each $D \in \mathcal{D}_k$,
 we insert
 one node $z_D$,
 an arc~$\arc{t_D, z_D}$
 and arcs~$\arc{z_D, w}$ for each $w \in R_D\setminus\set{t_D}$.
The inserted arcs are assigned capacities~$\bar{x}_D$
 where $\bar{x}$ is the current solution vector.
We can then check if
 there is a maximum flow from the chosen root $r$
 to a $t \in R\setminus\set{r}$
 with value less than $1$.
In that case,
 we have to add constraint~\eqref{eq:dcr:cut}
 with $U$ being a minimum cut set
  of nodes containing $r$.
Otherwise all necessary constraints have been generated.

A disadvantage of DCR over SER
 is that it has $k$ times more variables,
but
 cut constraints can usually be separated more efficiently
 than subtour elimination constraints.
However, \citet[App.~A]{GORZ12}
 provide a routine for SER that boils down to only max-flows,
 similar to what would be required for DCR as well.

First, we observe that
 \begin{alignat}{2}
  \constraint{\sum_{C \in \mathcal{C}_k\colon v \in C}{x_C}}{\geq}{1}{for all $v \in R$}
  \label{eq:ser:yv}
 \end{alignat}
 follows from projecting \eqref{eq:dcr:cut} onto $\mathbb{R}^{|\mathcal{C}_k|}$
  and by equivalence of DCR and SER.
We can start with the relaxation using only constraints~\eqref{eq:ser:full} and~\eqref{eq:ser:yv}.
Let $\bar{x}$ be the current fractional LP solution.
Let $\mathcal{\bar{C}}_k := \setwith{C \in \mathcal{C}_k}{\bar{x}_C > 0}$
 be the set of all (at least fractionally) chosen full components,
 and $y_r := \sum_{C \in \mathcal{C}_k\colon r \in R_C}{\bar{x}_C}$
 the `amount' of full components covering some $r \in R$.
We have $y_r \geq 1$ by~\eqref{eq:ser:yv},
 which is necessary for the separation algorithm to work correctly.

We construct an auxiliary network $N$ as follows.
We build a directed version of every chosen full component $C \in \mathcal{\bar{C}}_k$
 rooted at an arbitrary terminal $r_C \in R_C$.
The capacity of each arc in $C$ is simply $\bar{x}_C$.
We add a single source $s$ and arcs $(s, r_C)$ with capacity $\bar{x}_C$ for each $C$,
 as well as
 a single target $t$ and arcs $(r, t)$
 with capacity $y_r - 1$ for all $r \in R$.

For each $r \in R$,
 the separation algorithm
 computes a minimum $s$-$\set{r,t}$-cut in $N$.
Let $T$ be the node partition with $t \in T$
 and $\gamma$ the cut value.
Constraint~\eqref{eq:ser:subset} is violated for $R' := R \cap T$
 iff $\gamma < \sum_{r \in R}{y_r} - |R| + 1$.
If no violated constraints are found, $\bar{x}$ is a feasible and optimal fractional solution to SER.

\paragraph{The algorithm by Goemans~et~al.\ [2012].}%
Based on an optimal fractional solution to SER,
 the algorithm constructs an integral solution
 with an objective value that is at most $\varrho_k \ln{4}$ times worse than of the fractional solution.
This results in an approximation ratio and integrality gap of at most $\varrho_k \ln{4}$.
The algorithm has a randomized behavior
 but can be derandomized using dynamic programming
 (further increasing the running time by $\bigO{|V_C|^k}$ for each $C \in \mathcal{\bar{C}}_k$).
We focus on the former variant
 where the approximation ratio is not guaranteed but expected.

Let $\bar{x}$ be an optimal fractional solution to SER.
Initially, the algorithm constructs the auxiliary network $N$ representing $\bar{x}$
 as discussed for the separation.
Let $\mathcal{C}_N$ be the set of all components in $N$.
For each $C \in \mathcal{C}_N$, a set of core edges (see Def.~\ref{def:loss-core}) is computed.
Random core edges are sufficient for the expected approximation ratio.\footnote{%
  The derandomization performs this selection via dynamic programming.
  In contrast to \citet{BGRS13},
   the actual component selection (see below)
   is not randomized.
 }
In $N$, we add an arc $(s,v)$ for each core edge $e = (u,v)$,
 with the same capacity as for $e$.
In the main loop,
 we select beneficial components of $\mathcal{C}_N$ to contract,
 and modify $N$ to represent a \emph{feasible} solution
 for the contracted problem.
This is repeated until all components are contracted.
The contracted full components form a $k$-restricted Steiner tree.

The nontrivial issue here is to guarantee feasibility of the modified network.
Contracting the selected $C \in \mathcal{C}_N$ would make $N$ %
 infeasible.
It suffices to remove some core edges to reestablish feasibility.
The \emph{minimal} set of core edges that has to be removed
 is a set of bases of a matroid,
 and can hence be found in polynomial time.
For brevity, we call a basis of such a matroid for the contraction of $C$
 the \emph{basis for $C$}.

In each iteration, the algorithm selects a suitable full component $C$
 and a basis for $C$ of maximum weight.
However, the weight of a basis is not simply its total edge cost:
 After removing core edges,
 there are further edges that can be removed without affecting feasibility
 and whose costs are incorporated in the weight of the basis.
Computing the maximum-weight basis for $C$
 boils down to a min-cost flow computation.

\section{Algorithm Engineering}
\label{sec:alg-eng}

We now have a look at different algorithmic variants of the strong algorithms
 to achieve improvements for the practical implementation.
All variants do not affect the asymptotical runtime
 but may be beneficial in practice.
Since all strong algorithms are based on full components,
 we will look at the construction of full components first.
Afterwards we look at the concrete algorithms,
 the GCF/LCA
 and the algorithm based on SER.

\subsection{Generation of Full Components}

We consider three ways to generate the set $\mathcal{C}_k$ of $k$-restricted components.

The first one is the enumeration of full components,
 that is, for a given subset of terminals $R', |R'| \leq k$,
 we construct every full component on $R'$
 and check which one has minimum cost.
We call this strategy \texttt{gen=all}.

The second one is the generation using Voronoi regions.
This differs from the enumeration method
 in that we only test full components for optimality
 where the inner nodes lie in Voronoi regions of the terminals.
We call this strategy \texttt{gen=voronoi}.

These two generation strategies generate $\mathcal{C}_k$
 in a first phase.
The actual approximation algorithms then simply iterate over (and possibly delete from) this precomputed set
$\mathcal{C}_k$.

In contrast,
 the third strategy generates a full component
 when it is needed in the Greedy Contraction Framework.
Hence, $\mathcal{C}_k$ is only used implicitly.
This strategy is called \texttt{gen=ondemand}.

Before we discuss these variants and their applicability
 in more detail,
 we consider different strategies for the computation of shortest paths.

\paragraph{Precomputing shortest paths.}
For each of the above mentioned methods to construct full components,
 we need a fast way to retrieve shortest paths from any node to another node
 very often.
We may achieve this efficiently by
 precomputing an all-pairs shortest paths~(APSP) lookup table
  in time $\bigO{|V|^3}$ once,
 and then looking up predecessors and distances in $\bigO{1}$.

For $k=3$, there is at most one nonterminal with degree $3$ in each full component.
Hence, to build $3$-components,
 we only need shortest paths for pairs of nodes where at least one node is a terminal.
This allows us to only compute the single-source shortest paths~(SSSP) from each terminal
 in time $\bigO{|R| \cdot |V|^2}$.

We call the two above strategies \texttt{dist=apsp}
 and  \texttt{dist=sssp}, respectively.

We observe that since a full component must not contain an inner terminal,
 we need to obtain shortest paths over nonterminals only. %
We call such a shortest path \emph{valid}.
This allows us to rule out full components before they are generated.

We can modify both the APSP and SSSP computations
 such that they never find paths over terminals.
This way, the running time decreases when the number of terminals increases.
The disadvantage is that paths
 with detours over nonterminals
 are obtained.
We call this strategy \texttt{sp=forbid}.

Another way is to modify the APSP and SSSP computations
 such that they \emph{prefer} paths over terminals in case of a tie,
 and afterwards removing such paths
  (invalidating certain full components altogether).
That way we expect to obtain fewer valid shortest paths,
 especially in instances where ties are common,
 for example,
 instances from VLSI design or complete instances.
This strategy is called \texttt{sp=prefer}.

\paragraph{Enumeration of full components.}
For arbitrary $k$,
 the enumeration of full components is the only method
 to generate the list $\mathcal{C}_k$.
Note that for any $U \subseteq V$,
 $\bar{G}_U$ can be constructed in $\bigO{|U|^2}$
 by a lookup in the distance matrix for each node pair of~$U$.
A naïve construction of $\mathcal{C}_k$, given in~\citep{RZ05}, is as follows:
 for each subset $R' \subseteq R$ with $2 \leq |R'| \leq k$,
  compute $M_{R'}$
   as the smallest $\MST{\bar{G}_{R' \cup S}}$
   over all subsets $S \subseteq V\setminus{R}$ with $|S| \leq |R'| - 2$.
 We insert $M_{R'}$
  into $\mathcal{C}_k$
  if it does not contain inner terminals.
We call this strategy \texttt{gen=all:naïve}.
The time complexity (considering $k$ as input)
 is
 $
  \bigO{|R|^k |V \setminus R|^{k-2} k^4}
 $.

Reflecting on this procedure,
 we can do better.
The above approach requires many MST computations.
We can save time by precomputing a list $\mathcal{L}$ of
 potential inner trees of full components, that is,
 we store trees without any terminals.
For all applicable terminal subset cardinalities $t = 2, 3, \ldots, k$,
 we perform the following step:
\begin{enumerate}
 \item
  For all subsets $S \subseteq V \setminus R$ with $|S| = t-2$,
  we insert $\MST{\bar{G}_S}$ into $\mathcal{L}$, and
 \item
  for all subsets $R' \subseteq R$ with $|R'| = t$,
   we iterate over all trees $T$ in $\mathcal{L}$,
    connect each terminal in $R'$ to $T$
     as cheaply as possible,
    and insert the minimum-cost tree (among the constructed ones) into $\mathcal{C}_k$.
\end{enumerate}
We denote this strategy by \texttt{gen=all:smart}.
Its time complexity (again considering $k$ as input) is
 $
  \bigO{( |V\setminus R|^{k-2} + k |R|^k ) \, k^3 }
 $.

One disadvantage
 is that components may be generated that will never be used:
 in step 2,
 it can happen that the path from a terminal in $R'$ to another terminal in $R'$
  is cheaper than the cheapest path from a terminal in $R'$ to a nonterminal in $T$ in $\mathcal{L}$.
In this case,
 \texttt{gen=all:naïve} would not insert
 the constructed tree into $\mathcal{C}_k$
 since it contains an inner terminal
 (and could hence be decomposed into full components $C_1$ and $C_2$
  that are already included).
For \texttt{gen=all:smart}, a full component $C$ with larger cost is inserted into $\mathcal{C}_k$.
However, $C$ is never used in any of the algorithms since $d(C_1) + d(C_2) \leq d(C)$.

Although these general constructions work for all values of $k$,
 it is useful for the actual running time to
 make some observations for small values:
 $2$-components are exactly the shortest paths between any pair of terminals,
  so a $2$-component is essentially computed by a lookup.
 Moreover,
  $2$-components are not used in GCF and LCA, so these lookups can be skipped.
 For $3$-components, the graphs in $\mathcal{L}$ are single nodes.
 Generating $\mathcal{L}$ can hence be omitted and we directly iterate over all nonterminals instead.
We apply these observations for \texttt{gen=all:smart} and \texttt{gen=all:naïve}.

\citet{CGSW14} propose to compute $\mathcal{C}_k$
 essentially by running the first $k$ iterations of the dynamic programming algorithm by \citet{DW72}.
It is based on the simple observation that
 in order to compute a minimum-cost tree spanning $k$ terminals,
  the Dreyfus-Wagner algorithm also computes all minimum-cost trees spanning \emph{less than} $k$ terminals.
Hence one call to this restricted Dreyfus-Wagner algorithm yields all $k$-restricted components.
This is especially interesting for $k > 3$.
We denote this method by \texttt{gen=all:dw}.
The time complexity
 is
 $
  \bigO{|R|^{k} |V| (2^k + |V|) \, k}
 $.

\paragraph{Using Voronoi regions to build full components.}
\citet{Z93:AI}
 proposed to use Voronoi regions to obtain a faster algorithm for full component construction.
A \emph{Voronoi region}
  $\Vor{r} = \set{r} \cup \setwith{v \in V\setminus{R}}{d(r,v) \leq d(s, v) \;\forall s \in R}$
 of a terminal $r \in R$
 is the set of nodes that are nearer to $r$
 than to any other terminal.
Since we want the set of Voronoi regions of each terminal to be a partition of $V$,
 a node $v$ with $d(r, v) = d(s, v)$, $s \neq r$,
  is arbitrarily assigned either to $\Vor{r}$ or to $\Vor{s}$.
Voronoi regions can be computed efficiently using one multi-source shortest path computation
 where the terminals are the sources.
This can be performed using a trivially modified Dijkstra shortest path algorithm,
 or by adding a super-source,
  connecting it to all terminals with zero distance,
  and applying a single-source shortest path algorithm from the super-source~\citep{M88}.

The basic idea of the full component construction using Voronoi regions
 is as follows:
 when we want to construct a minimum full component on terminals $R'$
 we only consider the nonterminals in $\Vor{R'} := \bigcup_{t \in R'}\Vor{t}$
 instead of all nonterminals in $V$.
Note that this is only a practical improvement to the naïve enumeration
 and does not affect the asymptotic worst-case behavior.

\begin{figure}[p]
 \centering
 \subfigure[Complete instance graph $G$. Invisible edges have distance costs.]{
  \includegraphics{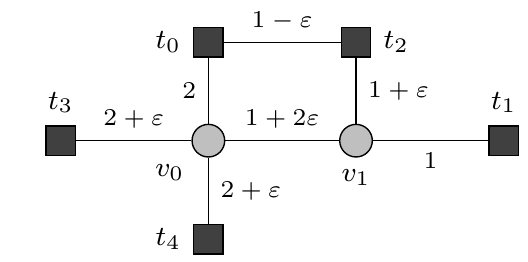}
  \label{fig:voronoi-counterexample:instance}
 }\\
 \subfigure[Minimum solution with cost $8 + 4\varepsilon$.]{
  \includegraphics{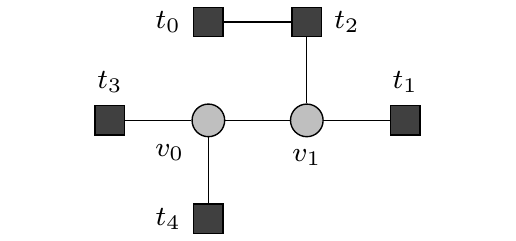}
  \label{fig:voronoi-counterexample:best}
 }
 \subfigure[Minimum Voronoi-based solution with cost $9 + 3\varepsilon$.]{
  \includegraphics{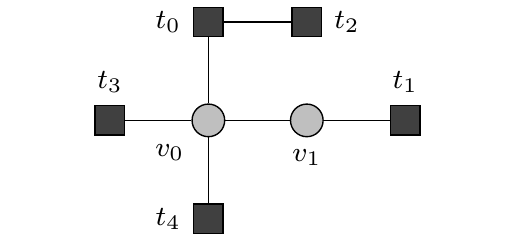}
  \label{fig:voronoi-counterexample:best-voronoi}
 }
 \subfigure[All $2$-components and their costs.]{
  \includegraphics{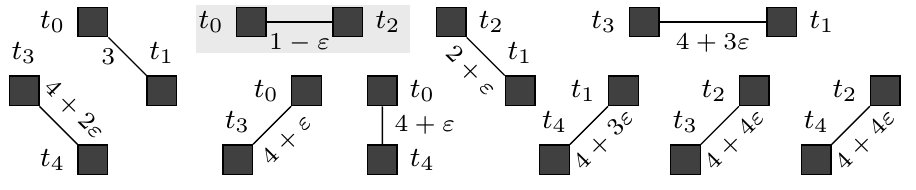}
  \label{fig:voronoi-counterexample:all-2-components}
 }
 \subfigure[All $4$-components that can be constructed with and without using Voronoi regions.]{
  \includegraphics{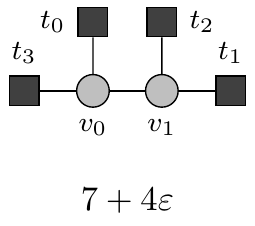}
  \quad
  \includegraphics{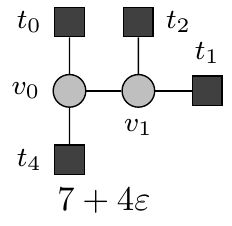}
  \quad
  \includegraphics{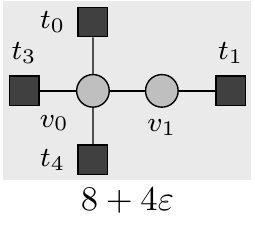}
  \label{fig:voronoi-counterexample:4-components-possible}
 }
 \subfigure[The $4$-components that cannot be constructed using Voronoi regions.]{
  \includegraphics{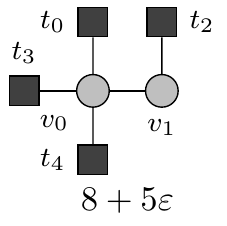}
  \quad
  \includegraphics{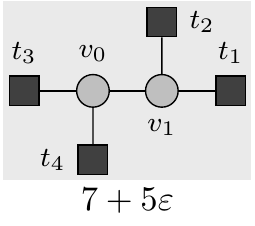}
  \label{fig:voronoi-counterexample:4-components-impossible}
 }
 \quad
 \subfigure[The latter $4$-components constructed using Voronoi regions.]{
  \includegraphics{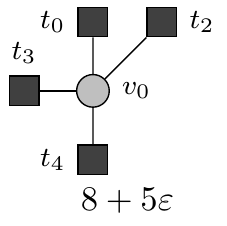}
  \quad
  \includegraphics{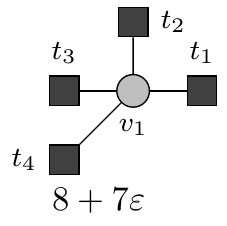}
  \label{fig:voronoi-counterexample:4-components-voronoi}
 }
 \caption{%
  Fig.~\ref{fig:voronoi-counterexample:instance} shows the complete instance $G$.
  The costs of the invisible edges coincide with the distances between each node pair.
  The unique Voronoi regions are
   $\Vor{t_i} = \set{t_i, v_i}$ for $i \in \set{0, 1}$
   and
   $\Vor{t_j} = \set{t_j}$ for $j \in \set{2,3,4}$.
  By the structure of the instance,
   a minimum $4$-restricted Steiner tree
   contains a $4$-component and a $2$-component,
   \ie $3$-components are not beneficial.
  Fig.~\ref{fig:voronoi-counterexample:best} shows the minimum $4$-restricted Steiner tree
   that is obtained by an enumeration of full components.
  Fig.~\ref{fig:voronoi-counterexample:best-voronoi} shows the minimum $4$-restricted Steiner tree
   that is obtained by full components that are constructed from Voronoi regions only.
  The tree in Fig.~\ref{fig:voronoi-counterexample:best} is constructed from the full components
   shown in Figures~\ref{fig:voronoi-counterexample:all-2-components},
    \ref{fig:voronoi-counterexample:4-components-possible}
    and~\ref{fig:voronoi-counterexample:4-components-impossible}
  whereas the tree in Fig.~\ref{fig:voronoi-counterexample:best-voronoi} uses Fig.~\ref{fig:voronoi-counterexample:4-components-voronoi}
   instead of~\ref{fig:voronoi-counterexample:4-components-impossible}.
  The full components used in Fig.~\ref{fig:voronoi-counterexample:best} and~\ref{fig:voronoi-counterexample:best-voronoi}
   are marked.
 }
 \label{fig:voronoi-counterexample}
\end{figure}

The question arises whether such a construction always leads to a set of full components
 that is necessary to obtain a minimum $k$-restricted Steiner tree.
\citet{Z93:AI} showed that GCF with $k=3$ and $\win_\text{abs}$
 finds the same maximum win in each iteration
 no matter if \texttt{gen=all} or \texttt{gen=voronoi} is used.
An analogous argumentation can be applied to prove this for $\win_\text{rel}$.
We generalize this result to see that Voronoi regions can always be used for $k=3$.
The following lemma shows that a minimum $3$-restricted Steiner tree can be obtained
 even if $\mathcal{C}_3$ is generated using Voronoi regions.

\begin{lemma}
 Let $T$ be a $3$-restricted Steiner tree for $\bar{G}$.
 There is a $3$-restricted Steiner tree $T^\ast$ for $\bar{G}$
  with $d(T^\ast) \leq d(T)$
  such that for each nonterminal $v_C$ in any full component $C$ in $T^\ast$,
  we have $v_C \in \Vor{R_C}$.
\end{lemma}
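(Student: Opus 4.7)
The plan is to prove the claim by an iterative local exchange on the $3$-components of $T$. A $2$-component in $\bar G$ is a single edge between two terminals and contains no nonterminal, so the Voronoi condition is vacuously satisfied there. Hence it suffices to handle every \emph{bad} $3$-component, that is, a full component $C = (v_C, R_C)$ whose center $v_C$ lies in $\Vor{t}$ for some $t \in R \setminus R_C$. For each such $C$ we will exhibit an edge swap producing a new $3$-restricted Steiner tree of no greater cost in which the number of bad components has strictly decreased; iterating gives $T^\ast$.

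Write $R_C = \set{a,b,c}$. Since $v_C$ has degree~$3$ in $T$ with neighbors exactly $a,b,c$, deleting $v_C$ from $T$ yields three subtrees $T_a, T_b, T_c$ (indexed by the neighbor each contains). The terminal $t$ lies in $T$ and thus in exactly one of them, say $T_x$ for some $x \in R_C$. Define $T' := (T \setminus \edge{v_C}{x}) \cup \edge{v_C}{t}$. The removed edge separates $T_x$ from the rest of $T$, and the added edge reconnects them since $t \in T_x$, so $T'$ is again a spanning tree in $\bar G$. Because $v_C \in \Vor{t}$ implies $d(v_C,t) \leq d(v_C,x)$, we obtain $d(T') \leq d(T)$. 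The center $v_C$ still has degree~$3$ in $T'$, now with terminal neighborhood $(R_C \setminus \set{x}) \cup \set{t}$, yielding a new $3$-component $C_1$; since $t \in R_{C_1}$ and $v_C \in \Vor{t}$, $C_1$ is good.

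The delicate point is to verify that the full-component decomposition of $T'$ differs from that of $T$ exactly by the replacement of $C$ by $C_1$, so that no previously good $3$-component inadvertently becomes bad. Every nonterminal other than $v_C$ has the same neighborhood in $T'$ as in $T$, so each other $3$-component retains its center, edges, and terminal leaves. The only degree changes are at $x$ (by $-1$) and at $t$ (by $+1$); these affect whether $x$ or $t$ is a split point of the decomposition but do not remove them as leaves of any component in which they already appeared as leaves, nor do they alter any other component's center or leaf set. Hence every exchange strictly decreases the number of bad $3$-components, and the process terminates. We expect the main obstacle to be precisely this decomposition-invariance argument: one must use that in a $3$-restricted Steiner tree every full component is either a single terminal edge or a three-edge star around a nonterminal, which makes the decomposition local at each nonterminal and therefore insensitive to degree changes at terminals outside the component being modified.
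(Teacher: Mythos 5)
Your proof is correct and follows essentially the same route as the paper's: the identical edge swap replacing $\edge{v_C}{x}$ by $\edge{v_C}{t}$, justified by the Voronoi inequality $d(v_C,t)\leq d(v_C,x)$, and iterated until no bad component remains (you are in fact more careful than the paper about termination and decomposition invariance). The only point to make explicit is the preliminary w.l.o.g.\ step — shortcutting degree-$2$ nonterminals in the metric closure so that every $3$-component becomes a star with a single nonterminal center — which you assert rather than derive, and which is needed both for your swap and to cover the additional nonterminals over which the statement quantifies.
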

\begin{proof}
 As $T$ is $3$-restricted,
  we can w.l.o.g.\ assume that
  no nonterminal in $T$ is adjacent to another nonterminal.
 Hence there is at most one nonterminal in each full component.
 On the other hand,
  any $3$-component $C$ has at least one nonterminal.
  Let $v_C$ be the unique nonterminal in $C$.
 If $v_C \in \Vor{R_C}$ for each $3$-component $C$,
  we are done.

 Let $C$ be a $3$-component with $v_C \notin \Vor{R_C}$.
 There must be a terminal $s \in R \setminus R_C$ such that $v_C \in \Vor{s}$.
 Consider the three connected components that emerge by removing $v_C$ from $T$.
 Let $t$ be the unique terminal adjacent to $v_C$
  that is in the same connected component as $s$.
 Replacing edge $\edge{v_C}{t}$ by $\edge{v_C}{s}$
  in the original $T$
  results in a tree $T'$ with $d(T') \leq d(T)$.
 We obtain $T^\ast$ by repeating this process. %
\end{proof}
The proof does not generalize to $k \geq 4$:
 let $u,v \in V_C\setminus{R_C}, v \notin \Vor{R_C}$;
 replacing an edge can remove $\edge{v}{t}$
 and insert $\edge{u}{s}$ instead of $\edge{v}{s}$.
That way, it is possible that the cost of the resulting component increases.
For example, the $4$-component of Fig.~\ref{fig:voronoi-counterexample:best}
 would become the $4$-component of Fig.~\ref{fig:voronoi-counterexample:best-voronoi}.
Fig.~\ref{fig:voronoi-counterexample}
 leads to the following observation.
\begin{observation}
For $k \geq 4$,
 the Voronoi-based approach does not guarantee
 minimum $k$-restricted Steiner trees.
Even for $k = 4$,
 one can obtain solutions at least $9/8$ times worse than using full component enumeration.
\end{observation}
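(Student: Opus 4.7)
The plan is to verify both claims of the observation via the explicit counterexample in Fig.~\ref{fig:voronoi-counterexample}, and then lift it to arbitrary $k \geq 4$.

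First I would confirm the Voronoi structure: by inspection of Fig.~\ref{fig:voronoi-counterexample:instance}, the nonterminals $v_0, v_1$ each lie strictly closer to $t_0, t_1$ respectively than to any other terminal, so the (unique) Voronoi regions are as stated in the caption, and in particular $v_0 \notin \Vor{t_j}$ and $v_1 \notin \Vor{t_j}$ for $j \in \{2,3,4\}$. Next I would enumerate all full components on up to four terminals. Since the instance is small, this is a finite check: the relevant $2$-components are those listed in Fig.~\ref{fig:voronoi-counterexample:all-2-components}, and the $4$-components are partitioned into those constructible from Voronoi regions (Fig.~\ref{fig:voronoi-counterexample:4-components-possible}) and those requiring a nonterminal $v_i$ outside $\Vor{R_C}$ (Fig.~\ref{fig:voronoi-counterexample:4-components-impossible}), whose Voronoi-constructed surrogates appear in Fig.~\ref{fig:voronoi-counterexample:4-components-voronoi}. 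I would argue that $3$-components cannot improve on these candidates: each $3$-component in this instance must use one of the nonterminals as a Steiner point, and a straightforward cost comparison shows that the best $4$-restricted Steiner tree decomposes into one $4$-component plus one $2$-component.

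With the candidate list fixed, I would compute the minimum over the two decomposition choices. Using all enumerated full components, the optimum combines the $4$-component of Fig.~\ref{fig:voronoi-counterexample:best} with the appropriate $2$-component for total cost $8 + 4\varepsilon$. Restricting to Voronoi-constructed components, the $4$-components of Fig.~\ref{fig:voronoi-counterexample:4-components-impossible} are unavailable and must be replaced by those in Fig.~\ref{fig:voronoi-counterexample:4-components-voronoi}; the best $4$-restricted tree one can then assemble is that of Fig.~\ref{fig:voronoi-counterexample:best-voronoi} with cost $9 + 3\varepsilon$. The ratio $(9 + 3\varepsilon)/(8 + 4\varepsilon) \to 9/8$ as $\varepsilon \to 0$, which yields the claimed bound for $k = 4$.

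Finally, to extend from $k = 4$ to arbitrary $k \geq 4$, I would note that any $k$-restricted Steiner tree is in particular allowed to use $4$-restricted components, so feeding the same instance into an algorithm that generates $k$-restricted components via Voronoi regions still forbids precisely the components in Fig.~\ref{fig:voronoi-counterexample:4-components-impossible} (they remain the unique minimum-cost components on their terminal sets, and no component on $\leq k$ terminals can span the instance more cheaply). Hence the same $9/8$ gap persists for every $k \geq 4$, proving both the non-optimality statement and the quantitative bound. The only non-routine step is the exhaustive but finite verification that no alternative decomposition beats $9 + 3\varepsilon$ under the Voronoi restriction; everything else is direct inspection of the figure.
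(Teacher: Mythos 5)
Your proposal is correct and follows essentially the same route as the paper, which justifies the observation purely by the explicit instance of Fig.~\ref{fig:voronoi-counterexample}: the enumeration-based optimum costs $8+4\varepsilon$ while the best Voronoi-constructible $4$-restricted tree costs $9+3\varepsilon$, giving a ratio that tends to $9/8$ as $\varepsilon\to 0$. Your added remark that the same instance works for all $k\ge 4$ (since larger components cannot compensate for the forbidden $4$-components) is a small elaboration the paper leaves implicit, not a different argument.
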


\paragraph{Direct generation of $3$-restricted components (\texttt{gen=ondemand}).}

This way of generating full components has been proposed by \citet{Z93:IPL}.
It is only available for $3$-restricted components and $\win_\text{abs}$ in the Greedy Contraction Framework.

In contrast to the other generation strategies, there is no explicit generation phase.
An optimal full component is directly constructed when necessary.
Therefore, we iterate over all nonterminals $v \in V \setminus R$.
In each iteration, we
 \begin{enumerate}
  \item find $s_0 \in R$
   with minimum distance $d(v, s_0)$ to $v$,
  \item find $s_1 \in R\setminus\set{s_0}$
   with maximum $d(\save_M(s_0, s_1)) - d(v, s_1)$,
  \item find $s_2 \in R\setminus\set{s_0,s_1}$
   with maximum $\win_\text{abs}(M,C) = \save(M,C) - d(v, s_0) - d(v, s_1) - d(v, s_2)$
   where $C$ is the full component of terminals $s_0, s_1, s_2$ with center $v$,
 \end{enumerate}
 and keep the full component $C$ with maximum $\win_\text{abs}(M,C)$.

\subsection{Greedy Contraction Framework}

\paragraph{Reduction of the full component set.}
We can show that
 the win %
 of any $C \in \mathcal{C}_k$
 will never increase
 during the execution of GCF or LCA.
This follows from
 $\save(M, C_2) \geq \save(M/C_1, C_2)$
 and
 $\save(M, C_2) \geq \save(M/\Loss(C_1), C_2)$
 for any full components $C_1, C_2$,
 which we prove in the following lemma.

\begin{lemma}
 Consider the metric complete graph $M$
  in any iteration of GCF or LCA.
 Let $u,v \in R$ with $u \neq v$,
  edge $e = \edge{u}{v}$ with arbitrary cost $d(e)$,
  and $M_e := (V_M, E_M \cup \set{e})$.
 We have
  $\save(M, C) \geq \save(M_e, C)$.
\end{lemma}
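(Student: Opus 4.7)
My plan is to reduce the inequality $\save(M,C) \geq \save(M_e,C)$ to the statement that contracting $C$ can only decrease the ``bottleneck distance'' between $u$ and $v$, and then to invoke two standard facts about minimum spanning trees. Let $\tilde u$, $\tilde v$ denote the images of $u$, $v$ in $M/C$ (equal precisely when both belong to $R_C$). By the definition of $\save$,
\[
 \save(M,C) - \save(M_e,C) = [d(\MST{M}) - d(\MST{M_e})] - [d(\MST{M/C}) - d(\MST{M_e/C})].
\]
The standard formula for how an MST cost changes when a single edge is inserted (or cheapened) yields $d(\MST{M}) - d(\MST{M_e}) = \max\{0, \save_M(u,v) - d(e)\}$ and, analogously, $d(\MST{M/C}) - d(\MST{M_e/C}) = \max\{0, \save_{M/C}(\tilde u, \tilde v) - d(e)\}$; in the degenerate case $\tilde u = \tilde v$ the new edge becomes a self-loop in $M_e/C$ and the second quantity is automatically $0$. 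Hence it suffices to prove $\save_M(u,v) \geq \save_{M/C}(\tilde u, \tilde v)$.

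For this I would invoke the bottleneck-path characterization: $\save_M(u,v)$ equals the minimum, over all $u$-$v$ paths $P$ in $M$, of $\max_{f \in P} d(f)$. Given any such $P$, projecting to $M/C$ (i.e., identifying vertices of $R_C$ and dropping resulting self-loops) produces a $\tilde u$-$\tilde v$ walk whose edges inherit their weights from $M$; its maximum edge weight is therefore at most that of $P$. Any such walk contains a simple $\tilde u$-$\tilde v$ path, and applying the bottleneck characterization in $M/C$ to that path gives $\save_{M/C}(\tilde u, \tilde v) \leq \save_M(u,v)$, closing the argument.

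The hard part will be the bookkeeping around the contraction, namely specifying how $M/C$ inherits edges and weights from $M$ (consistent with the paper's earlier shorthand $M/H := (M \cup H)/H$) and covering the degenerate case $\tilde u = \tilde v$, as well as handling the fact that $M$ is already complete so ``adding'' $e$ really means decreasing the cost of an already-present edge. Once those technicalities are settled, the two ingredients---the MST-update formula for a single edge insertion/cheapening and the bottleneck-path characterization---are completely standard, so the remaining work is merely to assemble them.
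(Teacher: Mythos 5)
Your proposal is correct and its first half coincides with the paper's proof: both start from the identity
$\save(M,C)-\save(M_e,C)=d(\MST{M})-d(\MST{M_e})-d(\MST{M/C})+d(\MST{M_e/C})$
and evaluate the first difference with the standard single-edge MST-update formula, giving $d(\MST{M})-d(\MST{M_e})=\max\set{0,\,d(f)-d(e)}$ with $f=\save_M(u,v)$. Where you diverge is the finish. The paper argues directly, via a short case analysis on $d(e)$ versus $d(f)$, that $d(\MST{M_e/C})\geq d(\MST{M/C})$, which together with the nonnegativity of the first term yields the claim. You instead apply the same update formula a second time in the contracted graph, reducing everything to the single monotonicity statement $d(\save_{M/C}(\tilde u,\tilde v))\leq d(\save_M(u,v))$, which you prove via the minimax-path characterization of the maximum MST-path edge and projection of paths under contraction. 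Your route is slightly longer but more systematic: it isolates the one structural fact the argument really needs (contraction can only lower bottleneck distances), it covers the degenerate self-loop case $\tilde u=\tilde v$ explicitly, and it transparently handles the regime where $d(e)$ is small enough that \emph{both} MST costs strictly drop --- a regime in which the paper's intermediate claim $d(\MST{M_e/C})\geq d(\MST{M/C})$, read literally, need not hold, and where the compensation between the two differences is exactly what your symmetric treatment makes visible. The bookkeeping issues you flag (how $M/C$ inherits weights, and that ``adding'' $e$ to a complete graph means cheapening an existing edge) are genuine but routine and are resolved correctly by your sketch.
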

\begin{proof}
 Consider $\MST{M}$.
 Inserting edge $e$ closes a cycle,
  so either $e$ or $f := \save_M(u,v)$ will not be in $\MST{M_e}$.
 We have $d(\MST{M_{e}}) = d(\MST{M}) + \min\set{0, d(e) - d(f)}$,
  and thus
 \begin{align*}
  \lefteqn{\save(M,C) - \save(M_e,C)} \\\qquad
   & = d(\MST{M}) - d(\MST{M/C}) - d(\MST{M_{e}}) + d(\MST{M_{e}/C}) \\
   & = d(\MST{M_{e}/C}) - d(\MST{M/C}) - \min\set{0, d(e) - d(f)}
 \end{align*}
  which proves the claim if $d(\MST{M_{e}/C}) \geq d(\MST{M/C})$.

 If $d(e) > d(f)$, we have $\MST{M_e} = \MST{M}$.
 Now consider $d(e) \leq d(f)$.
 For any $x,y \in R_C$
  with $\save_M(x,y) = f$,
  we have $d(\save_{M_e}(x,y)) \leq d(f)$.
 In any case, we have $d(\MST{M_{e}/C}) \geq d(\MST{M/C})$
  and thus the claim holds.
\end{proof}

We can utilize this fact to reduce the number
 of full components.
Every time we find a non-promising full component,
 we remove that full component from $\mathcal{C}_k$.
In particular,
 when we construct a non-promising full component,
 we discard it already
 before inserting it into $\mathcal{C}_k$.
We denote this variant by \texttt{reduce=on}.
Note that this variant does not work together with \texttt{gen=ondemand}.

\paragraph{Save computation.}

In order to compute the win of a full component $C$, %
 we first have to compute
 $\save(M,C) = d(\MST{M}) - d(\MST{M/C})$.
Doing a contraction and MST computation
 for each potential component
 would be cumbersome and inefficient.

Since we can consider a contraction of $u$ and $v$
 as an insertion of zero-cost edges $\edge{u}{v}$,
 we can construct $\MST{M/C}$ from $\MST{M}$
 by removing $\save_M(u,v)$ and inserting a zero-cost edge $\edge{u}{v}$
  for each pair $u,v \in R_C$.
It follows that $\save(M,C)$ coincides with the total cost of the removed save edges.
If we are able to compute $\save_M(u,v)$ in, say, constant time,
 we are also able to compute $\save(M,C)$ in $\bigO{k}$.

One simple idea (also proposed by \citet{Z93:AI})
 is to build and use an $|R| \times |R|$ matrix
 to simply lookup the most expensive edges between each pair of terminals directly.
After each change of $M$,
 this matrix is (re-)built in time $O(|R|^2)$.
We call this method \texttt{save=matrix}.

The build times of the former approach can be rather expensive.
\citet{Z93:IPL} provided another approach
 that builds an auxiliary binary arborescence $W(T)$
 for a given tree $T := \MST{M}$.
The idea of $W(T)$ is to represent a cost hierarchy
 to find a save edge using lowest common ancestor queries.
We define $W(T)$ inductively:
 \begin{itemize}
  \item If $T$ is only one node $v$, $W(T)$ is a single node representing $v$.
  \item If $T$ is a tree with at least one edge,
   the root node $r$ of $W(T)$ represents the maximum-cost edge $e$ of $T$.
   By removing $e$,
    $T$ decomposes into two trees $T_1$ and $T_2$.
   The roots of $W(T_1)$ and $W(T_2)$ are the children of $r$ in $W(T)$.
 \end{itemize}
Nodes in $T$ are leaves in $W(T)$ and edges in $T$ are inner nodes in $W(T)$.
To construct $W(T)$, we first sort the edges by their costs
 and then build $W(T)$ bottom-up.
The construction time of $W(T)$, dominated by sorting, takes time $\bigO{|R| \log|R|}$.

We now want to perform lowest common ancestor queries on $W(T)$
 in $\bigO{1}$ time.
Let $n \in \bigO{|R|}$ be the number of nodes in $W(T)$.
Some preprocessing is necessary to achieve that.
The theoretically best known algorithm by \citet{HT84}
 needs time $\bigO{n}$ for preprocessing
 but is too complicated and cumbersome to implement and use in practice.
We hence use a simpler and more practical $\bigO{n \log n}$-algorithm by \citet{BF00}.
In either case, the time to build $W(T)$ and do the preprocessing is $\bigO{|R| \log|R|}$.

Instead of rebuilding $W(T)$ from scratch after a contraction,
 we can directly update $W(T)$ in time proportional to the height of $W(T)$.
This can be accomplished
 by adding a zero-cost edge-representing node $u_0$,
 moving the contracted nodes $u_1, u_2$ to be children of $u_0$,
 and then fixing $W(T)$ bottom-up from the former parents of $u_1$ and $u_2$
 up to the root.
On the way up, we remove the node that represents edge $\save_M(u_1, u_2)$
 as soon as we see it.

We denote the variant of fully rebuilding $W(T)$ by \texttt{save=static},
 and the variant of updating $W(T)$ by \texttt{save=dynamic}.
In the latter case,
 it is sufficient to update $W(T)$ only,
 without necessity to store $M$ or $T$ explicitly.

\paragraph{Evaluation passes.}

The original GCF needs $\bigO{|R|}$ passes in the evaluation phase.
In each iteration, the whole
  (probably reduced)
  list $\mathcal{C}_k$ of full components
 has to be evaluated.
We investigate another heuristic strategy
 that performs one pass
 and hence evaluates the win function of each full component
 at most twice.
The idea is to
 sort $\mathcal{C}_k$ in decreasing order by their initial win values.
Then we do one single pass over the sorted $\mathcal{C}_k$
 and contract the promising full components.
We call this strategy \texttt{singlepass=on}.

\begin{lemma}
 GCF with \texttt{singlepass=on} on $k$-restricted components
  has an approximation ratio smaller than $2$.
\end{lemma}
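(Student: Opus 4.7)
I would split the analysis into two cases based on whether any $k$-restricted component has a positive initial win, leaning on the preceding monotonicity lemma throughout.

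First I record the baseline: before any contraction ($M = \bar G_R$), the Steiner tree derived from $\MST{\bar G_R}$ costs at most $2\cdot\op{OPT}$ by the Kou-Markowsky-Berman analysis. Because singlepass=on only contracts promising components, the accumulated Steiner tree cost equal to $d(\MST{\bar G_R}) - \sum_i \win_\text{abs}(M_{i-1}, C^{(i)})$ over the sequence $C^{(1)},\ldots,C^{(s)}$ of contracted components is monotonically non-increasing, so the output cost never exceeds $d(\MST{\bar G_R})$.

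\emph{Case A: no component in $\mathcal{C}_k$ is initially promising.} Here I would apply a telescoping exchange argument. Let $T^\star$ be a minimum $k$-restricted Steiner tree with $d(T^\star) \le \varrho_k \cdot \op{OPT}$, decomposed into full components $C^\star_1,\ldots,C^\star_m$. Contracting these one after another starting from $M_0 := \bar G_R$ eventually merges all terminals into a single node, so $d(\MST{\bar G_R}) = \sum_{i=1}^{m} \save(M_{i-1}, C^\star_i)$. By the monotonicity lemma the current win of each $C^\star_i$ at $M_{i-1}$ is at most its initial win, which is $\le 0$ by the case hypothesis; hence $\save(M_{i-1},C^\star_i) \le d(C^\star_i)$. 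Summing gives $d(\MST{\bar G_R}) \le d(T^\star) \le \varrho_k \cdot \op{OPT}$, which is strictly below $2\cdot\op{OPT}$ because $\varrho_k < 2$ for $k \ge 3$.

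\emph{Case B: some component has a positive initial win.} The first element in the sorted list has maximum initial win and is therefore contracted by singlepass=on, yielding a strict improvement over the starting $2$-approximation. Once the pass terminates in some state $M_f$, the monotonicity lemma guarantees that every non-contracted $C \in \mathcal{C}_k$ also has non-positive current win at $M_f$: it had non-positive current win at the time it was evaluated during the pass, and wins can only decrease further. Applying the telescoping argument of Case~A to the residual instance induced by $M_f$ bounds $d(\MST{M_f})$ by $\varrho_k$ times the optimum $k$-restricted Steiner tree cost of the residual instance. Combining this residual bound with the cost and saves of the already-contracted components should then yield the desired strict inequality below $2\cdot\op{OPT}$.

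\emph{Main obstacle.} The delicate step is the final accounting in Case~B: relating the residual $\varrho_k$-bound and the cost of the contracted components back to the original $\op{OPT}$ to obtain a \emph{uniform} constant strictly below $2$. One approach is an exchange argument that pairs pieces of $T^\star$ with the contractions singlepass=on actually performs, in the spirit of Zelikovsky's original analysis of the maximum-win greedy; the key step is verifying that the singlepass selection rule still satisfies the averaging inequalities used in that analysis---which it does, thanks to the monotonicity lemma ensuring that the fixed initial ordering never misses an opportunity without also witnessing its non-promising character.
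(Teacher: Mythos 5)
Your Case~A is correct and, as a bonus, sharper than what the paper records there: the paper simply observes that with no promising component the single-pass and multi-pass variants behave identically and invokes GCF's known ratio, whereas your telescoping over the components of a minimum $k$-restricted tree, combined with the monotonicity lemma, directly yields $d(\MST{\bar{G}_R}) \leq d(T^\star) \leq \varrho_k\, d(T^\ast) < 2\, d(T^\ast)$. That part stands.

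The genuine gap is Case~B, and by your own account it is unresolved: the ``final accounting'' relating the residual bound back to the original optimum is exactly the step you do not carry out, and the exchange argument you gesture at (pairing pieces of $T^\star$ with the contractions actually performed) is nontrivial for a fixed initial ordering and is nowhere verified. However, this gap is largely self-inflicted, because you are trying to prove more than the paper does. The paper's entire treatment of this case is the one-line observation you make in passing: if some component is promising, at least one promising component $C$ is contracted, so the output $T'$ satisfies $d(T') \leq d(T) - \win_\text{abs}(T, C) < d(T) \leq 2\, d(T^\ast)$, i.e.\ a \emph{per-instance} strict inequality, with no attempt at a uniform constant bounded away from $2$. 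You already hold every ingredient for this: your telescoped cost formula $d(\MST{\bar{G}_R}) - \sum_i \win_\text{abs}(M_{i-1}, C^{(i)})$ together with the fact that each contracted component is promising at contraction time (so every summand is positive, and in particular the first one is) closes Case~B in one line. So the fix is to delete the residual/exchange machinery and terminate Case~B with that inequality. If instead you insist on the stronger, uniform reading of ``approximation ratio smaller than $2$'' --- a constant $c < 2$ valid over all instances --- then your obstacle is real, your sketch does not overcome it, and the paper's proof offers no help either; that stronger claim would require redoing a Zelikovsky-style averaging analysis for the single-pass selection rule, which neither you nor the paper actually does.
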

\begin{proof}
 Let $T$ be the Steiner tree solution of the MST-based $2$-approximation~\citep{KMB81}
  and $T^\ast$ be a minimum Steiner tree.
 Consider the case that there are promising $k$-restricted components.
 At least one of them, say $C$, will be chosen and contracted in the single pass.
 This yields a Steiner tree $T'$ with
  $d(T')
   \leq d(T) - \win_\text{abs}(T, C)
   < 2 d(T^\ast)
  $.
 Now consider the case that there are no promising $k$-restricted components.
 The algorithm would not find any full component to contract,
  but GCF with \texttt{singlepass=off} would also not contract any full component.
 By the approximation ratio of GCF, we have $d(T) < 2 d(T^\ast)$.
\end{proof}

\subsection{Algorithms Based on Linear Programming}

We now consider variants for the LP-based algorithm.

\paragraph{Solving the LP relaxation.}

First, we can observe that
 during separation, each full component's inner structure is irrelevant
 for the max-flow computation.
It hence suffices to insert a directed star into $N$ for each chosen full component.
That way, the size of $N$ becomes independent of $|V|$;
 this should hence be particularly beneficial
 if $|R|$ is small compared to $|V|$.

To solve SER,
 we start with constraints~\eqref{eq:ser:full}. %
Since we need~\eqref{eq:ser:yv} for the separation algorithm,
 we may include them in the initial LP formulation (denoted by \texttt{presep=initial})
 or add them iteratively when needed (\texttt{presep=ondemand}).

In the beginning of the separation process,
 it is likely
 that the hypergraph $(R, \mathcal{\bar{C}}_k)$ for a current solution $\bar{x}$ is not connected.
Hence it may be beneficial
 to apply a simpler separation strategy first:
  perform a connectivity tests and
  add~\eqref{eq:ser:subset} for each full component.
This variant is denoted by \texttt{consep=on}.

\paragraph{Pruning full leaf components.}

After solving the LP relaxation,
 the actual approximation algorithm
 with multiple minimum-cost flow computations
 in a changing auxiliary network
 starts.
However, solution $\bar{x}$
 is not always fractional.
If the solution \emph{is} fractional,
 there may still be full components $C$
 with $\bar{x}_C = 1$.
We will show that we can directly choose some of these integral components
 for our final Steiner tree
 and then generate a smaller network $N$ that does not contain them.
\begin{lemma}
 Let $C^\ast \in \mathcal{\bar{C}}_k$
  and $|R_{C^\ast} \cap \bigcup_{C \in \mathcal{\bar{C}}_k\setminus\set{C^\ast}}{R_{C}}| = 1$.
 Let $v^\ast$ be that one terminal.
 The solution $\bar{x}$
  obtained by setting $\bar{x}_{C^\ast} := 0$
  is feasible
  for the same instance with reduced terminal set $R \setminus (R_{C^\ast} \setminus \set{v^\ast})$.
\end{lemma}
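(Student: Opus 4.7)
The plan is to fix $R'' := R \setminus (R_{C^\ast} \setminus \set{v^\ast})$, write $\bar{x}'$ for the modified vector (with $\bar{x}'_{C^\ast}=0$ and $\bar{x}'_C=\bar{x}_C$ otherwise), and verify each SER constraint for the reduced instance in turn. A preliminary observation is that~\eqref{eq:ser:yv} on the original instance, applied to any $w \in R_{C^\ast} \setminus \set{v^\ast}$, combined with the hypothesis that no other $C \in \mathcal{\bar{C}}_k$ contains $w$, forces $\bar{x}_{C^\ast} \geq 1$; together with the box constraint this gives $\bar{x}_{C^\ast}=1$. Moreover every $C \in \mathcal{\bar{C}}_k \setminus \set{C^\ast}$ has $R_C \subseteq R''$ and therefore remains a valid full component of the reduced instance.

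The bulk of the feasibility check is a straightforward transfer of the original constraints. Equation~\eqref{eq:ser:full} on the reduced instance follows by subtracting $(|R_{C^\ast}|-1)\bar{x}_{C^\ast} = |R_{C^\ast}|-1$ from both sides of the original. For a subtour inequality~\eqref{eq:ser:subset} on any $R' \subseteq R''$ with $|R'|\geq 2$: because $R_{C^\ast} \cap R' \subseteq \set{v^\ast}$, the coefficient of $\bar{x}_{C^\ast}$ on the original LHS is either $0$ or $C^\ast$ is absent from the sum entirely, so removing $C^\ast$ preserves the inequality. Constraint~\eqref{eq:ser:yv} at any $v \in R'' \setminus \set{v^\ast}$ transfers directly because $v \notin R_{C^\ast}$ and so $C^\ast$ never contributed to its original sum.

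The main obstacle is reestablishing~\eqref{eq:ser:yv} at $v^\ast$. In the original LP this constraint is already saturated by $\bar{x}_{C^\ast}=1$ alone, so it carries no usable information about the other components; the hypothesis only ensures that \emph{some} other chosen component contains $v^\ast$, not that their total $\bar{x}$-mass reaches $1$. My plan is to apply~\eqref{eq:ser:subset} to $R' := R \setminus R_{C^\ast}$ and compare with~\eqref{eq:ser:full}. Since every $C \in \mathcal{\bar{C}}_k \setminus \set{C^\ast}$ satisfies $R_C \subseteq R''$ and $|R_C|\geq 2$, such a $C$ lies in the subtour sum and its coefficient there is exactly the one in~\eqref{eq:ser:full} decreased by $1$ if $v^\ast \in R_C$ and unchanged otherwise, while $C^\ast$ appears only in~\eqref{eq:ser:full}. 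Subtracting the subtour inequality from the equality then collapses to $\sum_{C \in \mathcal{\bar{C}}_k \setminus \set{C^\ast}\colon v^\ast \in R_C}\bar{x}_C \geq 1$, as required.

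Finally I would dispatch the corner cases $|R \setminus R_{C^\ast}| \in \set{0,1}$, in which the set $R'$ chosen above is too small to be plugged into~\eqref{eq:ser:subset}. When $|R''|=1$ there are no nontrivial covering constraints to check. When $|R''|=2$, every $C \in \mathcal{\bar{C}}_k \setminus \set{C^\ast}$ satisfies $R_C \subseteq R''$ and $|R_C|\geq 2$, so $R_C = R''$; the already verified~\eqref{eq:ser:full} on the reduced instance then reads $\sum_{C \neq C^\ast}\bar{x}_C = 1$, which is exactly the content of~\eqref{eq:ser:yv} at $v^\ast$.
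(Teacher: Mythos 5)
Your proof is correct and its core follows essentially the same route as the paper's: both first pin down $\bar{x}_{C^\ast}=1$ from the covering constraint at a terminal private to $C^\ast$, then observe that~\eqref{eq:ser:full} loses exactly $|R_{C^\ast}|-1$ on both sides, and that~\eqref{eq:ser:subset} for any $R'$ inside the reduced terminal set is preserved because $C^\ast$ meets such an $R'$ in at most the single terminal $v^\ast$, so its coefficient there vanishes. Where you diverge is the substantial effort spent re-establishing~\eqref{eq:ser:yv} at $v^\ast$: the paper omits this entirely, and rightly so for the literal claim, since~\eqref{eq:ser:yv} is not a defining constraint of SER but a valid inequality implied by~\eqref{eq:ser:full} and~\eqref{eq:ser:subset}, so feasibility is already settled once those two families are checked. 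That said, your derivation of it --- subtracting the subtour inequality for $R'=R\setminus R_{C^\ast}$ from~\eqref{eq:ser:full} and handling the degenerate sizes of the reduced terminal set separately --- is sound, and making it explicit is defensible given that the separation routine and the subsequent contraction machinery rely on $y_{v^\ast}\geq 1$ holding for the reduced instance.
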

\begin{proof}
 By constraint~\eqref{eq:ser:yv}
  we have $\bar{x}_{C^\ast} = 1$.
 We set $\bar{x}_{C^\ast} := 0$
  and $R := R \setminus (R_{C^\ast} \setminus \set{v^\ast})$,
  and observe how the left- (LHS) and the right-hand side (RHS)
   of the SER constraints change.
 The LHS of constraint~\eqref{eq:ser:full}
  is decreased by $(|R_{C^\ast}| - 1) x_{C^\ast} = |R_{C^\ast}| - 1$,
  its RHS is decreased by $|R_{C^\ast} \setminus \set{v^\ast}| = |R_{C^\ast}|-1$;
 constraint~\eqref{eq:ser:full} still holds.
 Consider~\eqref{eq:ser:subset}.
 On the LHS,
  the $\bar{x}_C$-coefficients of full components $C \neq C^\ast$ with $\bar{x}_C > 0$ are not affected
  since $R' \cap R_C$ contains no terminal from $R_{C^\ast} \setminus \set{v^\ast}$.
 The LHS hence changes by $\max(|R' \cap R_{C^\ast}| - 1, 0)$.
 The RHS is decreased by
  $|R' \cap (R_{C^\ast} \setminus \set{v^\ast})|$
   which coincides with the LHS change.
\end{proof}
Hence, we can always choose and contract such full leaf components
 without removing any core edges from outside that component
 and without expensive search.
We call this strategy \texttt{prune=on}.

\paragraph{Solving a stronger LP relaxation.}

One way that could help to improve the solution quality
 is to use a strictly stronger relaxation than SER.
Consider the constraints
 \begin{equation}
  \sum_{C \in \mathcal{C}'}{x_{C}} \;\leq\; S(\mathcal{C}')
  \qquad
   \forall
    \mathcal{C}' \subseteq \mathcal{C}_k,
  \label{eq:ser:cycle-nphard}
 \end{equation}
 where
  $S(\mathcal{C}')$
  is the maximum number of full components of $\mathcal{C}'$
   that can simultaneously be in a valid solution.
$S(\mathcal{C}')$ coincides with the maximum number of hyperedges
 that can form a subhyperforest in $H = (R, \setwith{R_C}{C \in \mathcal{C}'})$.
Unfortunately,
 obtaining $S(\mathcal{C}')$ is an NP-hard problem
 as can be shown by an easy reduction from \textsc{Independent Set}.

We try to solve the problem for a special case of $\mathcal{C}'$ only:
  \begin{alignat}{2}
   \sum_{C \in \mathcal{C}'}{x_{C}} \; & \leq \; 1
   & \qquad &
    \forall
     \mathcal{C}' \subseteq \mathcal{C}_k\colon
      |C_i \cap C_j| \geq 2
     \; \forall C_i, C_j \in \mathcal{C}' \text{.}
   \label{eq:ser:cycle1}
  \end{alignat}

\begin{lemma}
 \label{lemma:strictly-stronger}
 SER with constraint~\eqref{eq:ser:cycle1} is strictly stronger than SER.
\end{lemma}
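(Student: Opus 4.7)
The plan is to establish strict inclusion of the two LP feasible regions in the natural way. One direction, namely that the feasible region of SER$\,+\,$\eqref{eq:ser:cycle1} is contained in that of SER, is trivial since adding constraints can only shrink the feasible region. For strictness, I would exhibit one concrete fractional vector that is SER-feasible but violates~\eqref{eq:ser:cycle1}. As a side remark, to make sure the strengthened LP is still a genuine relaxation of the Steiner tree ILP, I would observe that \eqref{eq:ser:cycle1} is valid on the integer hull: if an integer feasible solution had $\bar x_{C_1} = \bar x_{C_2} = 1$ for two distinct $C_1, C_2 \in \mathcal{C}'$, then applying~\eqref{eq:ser:subset} to $R' := R_{C_1} \cup R_{C_2}$ gives a left-hand side of at least $(|R_{C_1}|-1) + (|R_{C_2}|-1) = |R'| + |R_{C_1} \cap R_{C_2}| - 2 \geq |R'|$, contradicting the right-hand side $|R'|-1$ because $|R_{C_1} \cap R_{C_2}| \geq 2$.

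For the separating example, I would take $R := \{1,2,3,4,5,6\}$ with four full components whose terminal sets are $R_{C_1} := \{1,2,3,4\}$, $R_{C_2} := \{1,2,5,6\}$, $R_{C_3} := \{3,4,5,6\}$, and $R_{C_4} := \{1,6\}$, each realized in an actual graph by introducing a Steiner center per $4$-component joined to its leaves (and a direct edge for $C_4$) with edge costs chosen so that these are precisely the minimum full components on their leaf sets. The three $4$-components pairwise share exactly two terminals, so $\mathcal{C}' := \{C_1,C_2,C_3\}$ satisfies the hypothesis of~\eqref{eq:ser:cycle1}. Setting $\bar x := 1/2$ on all four components yields $\bar x_{C_1} + \bar x_{C_2} + \bar x_{C_3} = 3/2 > 1$, violating~\eqref{eq:ser:cycle1}.

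The main obstacle, as I expect it, is simply verifying SER-feasibility of $\bar x$. Constraint~\eqref{eq:ser:full} is immediate: $3 \cdot \tfrac{1}{2} \cdot 3 + \tfrac{1}{2} \cdot 1 = 5 = |R|-1$. For~\eqref{eq:ser:subset} one must check all $R' \subseteq R$ with $2 \leq |R'| \leq 5$, but the $S_3$-symmetry permuting the ``coupled'' pairs $\{1,2\}, \{3,4\}, \{5,6\}$ reduces this to only a handful of representative cases. The binding inequalities are the three pairwise intersections $\{1,2\}, \{3,4\}, \{5,6\}$, where~\eqref{eq:ser:subset} comes out exactly tight; the remaining cases, including those involving $C_4$, break the symmetry slightly and require direct bookkeeping of $|R' \cap R_{C_i}|$, but I expect all of them to be strict. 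No deeper structural difficulty is anticipated beyond this casework.
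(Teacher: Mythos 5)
Your containment direction and your validity check for~\eqref{eq:ser:cycle1} on integral solutions (via~\eqref{eq:ser:subset} applied to $R_{C_1}\cup R_{C_2}$) are correct, and your six-terminal example does check out: with $\bar x_{C_i}=1/2$ on the three pairwise-2-intersecting $4$-components and the $2$-component $\set{1,6}$, constraint~\eqref{eq:ser:full} is tight and~\eqref{eq:ser:subset} holds for every $R'$ (writing $n_1,n_2,n_3$ for $|R'\cap\set{1,2}|$, $|R'\cap\set{3,4}|$, $|R'\cap\set{5,6}|$, the three $4$-component coefficients sum to at most $2|R'|-3$ whenever all are positive, and the degenerate cases are tight). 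So you have correctly exhibited a SER-feasible point that is cut off by~\eqref{eq:ser:cycle1}.

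The gap is that this only shows the feasible \emph{region} strictly shrinks, which is not what ``strictly stronger'' means for an LP relaxation: the objective is fixed to $\sum_C d(C)\,x_C$ for actual edge costs, and cutting off a non-optimal vertex need not change the LP bound at all. The paper's proof does the extra work you omit: on a cycle instance where every full component on $\ell$ terminals costs exactly $\ell$, it shows (i) the violating point $\bar x$ attains the SER optimum, because the objective and~\eqref{eq:ser:full} depend only on the aggregates $\bar X_\ell=\sum_{|R_C|=\ell}\bar x_C$ and shifting mass from $\bar X_k$ to any $\bar X_i$, $i<k$, increases the objective by at least $\tfrac{1}{k-2}\varepsilon$; and (ii) constraint~\eqref{eq:ser:cycle1} forces $\bar X_k\le 1$, i.e.\ a decrease of $\tfrac{1}{k-1}$, so the optimum of the strengthened LP is strictly larger. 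In your construction this step is genuinely in doubt: once the instance is realized as a graph, $\mathcal{C}_k$ contains \emph{all} $2$- and $3$-components as well, and unless the costs are engineered so that your $\bar x$ is optimal and no cheap reallocation onto other components restores the old value, the two LPs may have identical optima. You would need either to carry out an optimality/perturbation argument for your specific cost assignment or to adopt a construction (like the paper's uniform-cost cycle) in which such an argument is tractable.
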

\begin{proof}
 Clearly, the new LP is at least as strong as SER
  since we only add constraints.
 Let $G = (V,E)$ be a graph
  with $V = \set{v_0, \ldots, v_k, t_0, \ldots, t_k}$
  and $E = \setwith{\edge{v_i}{t_i}, \edge{v_{i}}{v_{i+1}}}{i=0,\ldots,k}, v_{k+1} = v_0$,
  with terminals $R = \set{t_0, \ldots, t_k}$,
  cost $1$ for all edges incident to any terminal,
  and cost $0$ for all other edges.
 Each full component $C$ has exactly cost $|R_C|$.

 Consider the solution $\bar{x}$
  with
  $$
   \bar{x}_C = \begin{cases}
    \frac{k}{k^2-1} & \text{if $|R_C| = k$,} \\
    0 & \text{otherwise}
   \end{cases}
  $$
  that is feasible to SER
  by
  \begin{align*}
   \sum_{C \in \mathcal{C}_k}{(|R_C| - 1) \bar{x}_C} &
     \; = \; \sum_{C \in \mathcal{C}_k \colon |R_C| = k}{(k - 1) \frac{k}{k^2-1}}
  \\ &
     \; = \; (k + 1)(k - 1) \frac{k}{k^2-1}
     \; = \; k
     \; = \; |R| - 1 \text{,}
  \end{align*}
   and
   \eqref{eq:ser:subset}
   holds since for
    $|R'| = k$ we have $\frac{k}{k^2 - 1} \leq k - 1$,
    and for
    $|R'| < k$ we have $0 \leq |R'| - 1$.
 The objective value for $\bar{x}$ is
  $
   \sum_{C \in \mathcal{C}_k}{|R_C| \bar{x}_C}
    = (k+1) k \frac{k}{k^2-1}
    = \frac{k^3 + k^2}{k^2 - 1}
  $.

 Let $\bar{X}_\ell := \sum_{C \in \mathcal{C}_k\colon |R_C| = \ell}{\bar{x}_C}$.
 Note that for any solution $\bar{x}$,~\eqref{eq:ser:full} and the objective function can be written as
  $\sum_{\ell = 2}^{k}{(\ell - 1) \bar{X}_\ell} = k$
  and
  $\sum_{\ell = 2}^{k}{\ell \bar{X}_\ell}$, respectively.
 Assume we decrease $\bar{X}_k$ by some $\varepsilon > 0$.
 Since
  $\sum_{\ell = 2}^{k}{(\ell - 1) \bar{X}_\ell}
   = k - (k-1)\varepsilon
   < k = |R| - 1$,
  we would have to increase $\bar{X}_i$ by $\frac{k-1}{i-1} \varepsilon_i$
   for all $i \in \set{2, \ldots, k-1}$
   to become feasible for~\eqref{eq:ser:full} again.
 Here, $\varepsilon_2, \ldots, \varepsilon_{k-1} > 0$
  are chosen
  such that they sum up to $\varepsilon$.
 This increases the objective value by
  $\sum_{i = 2}^{k-1}{i \, \frac{k-1}{i-1} \varepsilon_i} - k \varepsilon$
  which is clearly minimized
   by setting $\varepsilon_{k-1} := \varepsilon$ and $\varepsilon_i := 0$ for $i < k-1$.
 The increase of the objective value is hence at least
  $(k-1) \frac{k-1}{k-2} \varepsilon - k \varepsilon
   = \frac{1}{k-2} \varepsilon > 0
  $.

 Since
  \begin{align*}
   \sum_{C \in \mathcal{C}_k}{\bar{x}_C} &
    \; = \; (k + 1)\frac{k}{k^2-1}
    \; = \; \frac{k}{k - 1}
    \; = \; 1 + \frac{1}{k - 1}
  \end{align*}
  violates~\eqref{eq:ser:cycle1},
  we add the constraint
   $
   \sum_{C \in \mathcal{C}_k\colon |R_C| = k}{x_C} \leq 1
   $.
 We thus have to decrease $\bar{X}_k$ by $\frac{1}{k-1}$
  to form a feasible solution,
  which increases the objective value.
\end{proof}

Finding a $\mathcal{C}'$
 is equivalent to
 finding a clique in the conflict graph $G' = (\mathcal{C}_k, \setwith{\edge{C_i}{C_j}}{C_i, C_j \in \mathcal{C}_k, |R_{C_i} \cap R_{C_j}| \geq 2})$.
Based on the proof of Lemma~\ref{lemma:strictly-stronger},
 we restrict ourselves to cliques with at most $k+1$ nodes.
Such cliques
 can be found in polynomial time for constant $k$,
 in order to separate the corresponding constraints.
Observe that
 $G'$ need only be constructed from $\mathcal{\bar{C}}_k$
 instead of $\mathcal{C}_k$.
We call this strategy \texttt{stronger=on}.

\paragraph{Bounding the LP relaxation.}
An idea to improve the running time for the LP
 is to initially compute a simple $2$-approximation
 and apply its solution value as an upper bound on the objective value.
If this bound is smaller than the pure LP solution,
 there is no feasible solution to the bounded LP
 and we simply take the $2$-approximation.
We call this strategy \texttt{bound=on}.

\section{Experimental Evaluation}
\label{sec:exp-eval}

In the following experimental evaluation,
 we use an Intel~Xeon E5-2430 v2, 2.50\,GHz
 running Debian 8.
The binaries are compiled in 64bit
 with g++~4.9.0 and \texttt{-O3} optimization flag.
All algorithms are implemented as part of the free C++ Open Graph Drawing Framework (OGDF),
 the used LP solver is CPLEX~12.6.
We evaluate our algorithms with the 1\,200 connected instances from the \SteinLib library~\citep{KMV00},
 the currently most widely used benchmark set for STP.

We say that an algorithm \emph{fails} for a specific instance
 if it exceeds one hour of computation time or needs more than 16 GB of memory.
Otherwise it \emph{succeeds}.
Success rates and failure rates are the percentage of instances that succeed or fail, respectively.

We evaluate the solution quality of a solved instance by computing a \emph{gap}
 as $\frac{d(T)}{d(T^\ast)} - 1$,
 the relative discrepancy between the cost of the found tree $T$ and the cost of the optimal Steiner tree $T^\ast$,
 usually given in thousandths (\permil).
When no optimal solution values are known,
 we use the currently best known upper bounds
 from the 11th DIMACS Challenge~\citeyearpar{DIMACS14Bounds}.

\begin{table}
 \tbl{%
  Mapping from \SteinLib instance groups to our instance grouping.
  \label{table:groups}
 }{%
  \scalebox{0.9}{%
  \begin{tabular}{|r|l|}
   \hline
            Group & \SteinLib group \\
   \hline
     EuclidSparse & \texttt{P6E} \\
   EuclidComplete & \texttt{P4E} \\
     RandomSparse & \texttt{B}, \texttt{C}, \texttt{D}, \texttt{E}; \texttt{P6Z}; non-complete instances from \texttt{MC} \\
   RandomComplete & \texttt{P4Z}; complete instances from \texttt{MC} \\
  IncidenceSparse & non-complete instances from \texttt{I080}, \texttt{I160}, \texttt{I320} and \texttt{I640} \\
IncidenceComplete & complete instances from \texttt{I080}, \texttt{I160}, \texttt{I320} and \texttt{I640} \\
ConstructedSparse & \texttt{PUC}; \texttt{SP} \\
SimpleRectilinear & \texttt{ES$\star$FST} and \texttt{TSPFST} with $|R| < 300$ or $|R|/|V| > 0.75$ \\
  HardRectilinear & \texttt{ES$\star$FST} and \texttt{TSPFST} with $|R| \geq 300$ and $|R|/|V| \leq 0.75$ \\
      VLSI / Grid & \texttt{ALUE}, \texttt{ALUT}, \texttt{LIN}, \texttt{TAQ}, \texttt{DIW}, \texttt{DMXA}, \texttt{GAP}, \texttt{MSM}; \texttt{1R}, \texttt{2R} \\
      WireRouting & \texttt{WRP3}, \texttt{WRP4} \\
   \hline
  \end{tabular}
 }}
\end{table}

Besides the original \SteinLib instance groups,
 we also consider a slightly different grouping
 where suitable, to obtain fewer but internally more consistent graph classes,
 see Table~\ref{table:groups} for details.
Additionally,
 \emph{Large} consists of all instances with more than 16\,000 edges or 8\,000 nodes,
 \emph{Difficult} are instances that could not be solved to proven optimality within one hour
  (according to the information provided by \SteinLib),
 and
 \emph{NonOpt} are 35 instances
  (31 from \texttt{PUC} and 4 from \texttt{I640})
  we still do not know the optimal values for.
Last but not least,
 we grouped instances by terminal coverage:
 the group \emph{Coverage $X$}
  contains all instances with $X - 10 < 100\, |R|/|V| \leq X$.

\subsection{Evaluation of $2$-Approximations}
\label{sec:exp-eval:2approx}

We consider the basic $2$-approximations by
 \citet{TM80}~(\algo{TM}, with \citep{AW02}),
 \citet{KMB81}~(\algo{KMB}),
 and \citet{M88}~(\algo{M}).
\algo{TM} and \algo{M} succeed for all instances,
 whereas \algo{KMB} fails for eight instances (mainly from \texttt{TSPFST})
 due to the memory limit.
Every instance is solved in less than 0.1 seconds using \algo{M},
 less than 0.3 seconds using \algo{TM},
 and at most 52 seconds using \algo{KMB} (instance \texttt{alue7080}).

Comparing the solution quality of
 \algo{TM}, \algo{KMB}, and \algo{M}
 gives further insights.
\algo{TM} yields significantly better solutons than \algo{M}:
 80.3\,\% of the solutions are better using \algo{TM}
 and only 1.7\,\% are worse.
Especially wire-routing instances
 are solved much better using \algo{TM}.
For example,
 the optimal solution for \texttt{wrp3-60}
 is 6\,001\,164,
 and it is solved to 6\,001\,175 (gap 0.0018\,\permil) using \algo{TM}
 and to 11\,600\,427 (gap 933.03\,\permil, \ie almost factor $2$) using \algo{M}.
A comparison between \algo{TM} and \algo{KMB} gives similar results.
On average, gaps are
 74.1\,\permil{} for \algo{TM},
 194.5\,\permil{} for \algo{M},
 and
 198.3\,\permil{} for \algo{KMB}.
\algo{TM} solves 9.2\,\% of the instances to optimality;
 \algo{M} only 3.9\,\% and \algo{KMB} 3.8\,\%.

We see that \algo{TM} is the best candidate:
 although slower than \algo{M}, it takes only negligible time,
 and solution quality is almost always better than for \algo{M} or \algo{KMB}.
When considering $2$-approximations in the following,
 we will always choose \algo{TM}.
In particular,
 we set \algo{TM} to be the final $2$-approximation to incorporate contracted full components (see Section~\ref{sec:GCF}).

\subsection{Evaluation of Full Component Enumeration}
\label{sec:exp-eval:comp-enum}

\paragraph{Shortest path algorithms.}
Before we consider the enumeration of full components, we compare the running times to compute
 the shortest path matrices using \texttt{sp=forbid} and \texttt{sp=prefer}.
In combination with \texttt{dist=sssp},
 these running times differ by at most 0.5 seconds for 97.8\,\% of the instances.
The only four outliers with more than 10 seconds time difference
 (maximum 120 seconds)
 are precisely the instances
 with a terminal coverage $|R|/|V| \geq 0.25$ and more than 8\,000 nodes.
In combination with \texttt{dist=apsp},
 there are already 20 instances where \texttt{forbid} can save between 15 seconds and 6 minutes,
 but the differences are still negligible for 87.1\,\% of the instances.
Hence, for the majority of the instances,
 \texttt{forbid} does not provide a significant time saving.

In contrast to this,
 we are more interested in the number of valid shortest paths that are obtained by the different variants.
A smaller (but sufficient) set of valid paths results in fewer full components.
On average,
 \texttt{forbid} generates a path between 76.3\,\% of all terminals pairs;
 \texttt{prefer} only between 59.1\,\%.
In particular, \texttt{forbid} could not reduce the number of valid paths at all
 for 65.8\,\% of the instances.
This number drops to 37.6\,\% using \texttt{prefer}, which is much better.
Consequently, \texttt{prefer} yields 10.5\,\% fewer $3$-restricted components than \texttt{sp=forbid}.
Since this saves memory and time for the further steps of the algorithm, we use \texttt{sp=prefer} in the following.

For $k = 3$,
 we can either use \texttt{dist=sssp} or \texttt{dist=apsp}.
While \texttt{sssp} is able to compute the shortest path matrices for every instance of the \SteinLib,
 all instances with more than 15\,000 nodes fail using \texttt{apsp}
 since the full APSP matrix is too big to fit into 16 GB of memory.

After filtering out all instances with negligible running times %
 using both algorithms,
 we obtain the following rule of thumb:
 use \texttt{sssp}
 iff
  the graph is not too dense (say, density $E/\binom{|V|}{2} \leq 0.25$). %
There are only a handful of outliers (in the I640 set) with small differences of up to $0.2$ seconds.%
\footnote{%
 Be aware that our rule suffices for the \SteinLib
  but is unlikely to hold as a general rule.
 While there are nearly all kinds of terminal coverages,
  the density distribution of the \SteinLib instances is quite unbalanced:
  there are no non-complete instances with density larger than $0.2$
  and most of the instances are sparse.}
We hence apply this rule to our experiments.

\paragraph{Full component generation.}
Fig.~\ref{fig:enum-performance}
 shows the percentage of instances
 such that the $3$-restricted components can be generated
 in a given time,
 for each \texttt{gen}- and \texttt{sp}-variant.
Note that \texttt{gen=all:smart} and \texttt{gen=all:naïve}
 use the same observation for $k=3$ and are hence equal.
It can clearly be observed
 that \texttt{gen=voronoi} with \texttt{sp=prefer} (as said above)
 is the best choice for $k=3$.

Fig.~\ref{fig:comp-enum}
 shows the success rates for the generation of $k$-restricted components
 for $k \in \set{3,4,5,6}$.
We can see that \texttt{gen=all:dw} is superior for $k \geq 4$.

\begin{figure}
\centering
 \includegraphics{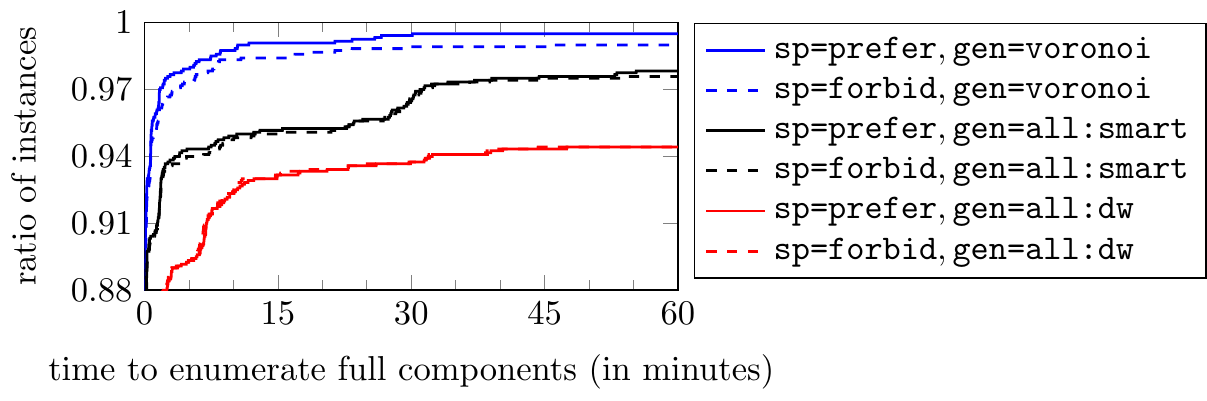}
\caption{Percentage of instances whose $3$-restricted components are generated within the given time
 for different variants.}
\label{fig:enum-performance}
\end{figure}

\subsection{Evaluation of GCF and LCA}
\label{sec:exp-eval:combinatorial}

We now consider the strong combinatorial algorithms with $k=3$.
Three instances failed for all algorithmic variants:
 \texttt{rl11849fst} (with 13\,963 nodes and 11\,849 terminals) is the only instance that failed due to the memory limit;
 \texttt{es10000fst01} and \texttt{fnl4461fst} failed due to the time limit.

\paragraph{Strategies for GCF.}
We first have a look at the reduction strategy \texttt{reduce=on}.
A comparison for $\win_\text{abs}$ with \texttt{gen=voronoi}
 shows that without reduction,
 we generate 68\,270 $3$-components on average,
 whereas
 we generate only 6\,136 $3$-components
 with reduction;
 in other words, 91.0\,\% of the generated full components
  are not promising
  and are hence removed directly after construction.
Considering the actual contractions,
 we effectively use 7.5\,\% of the generated unreduced full component set
 but 22.2\,\% of the reduced set.
The biggest time savings
 can be observed
 on rectilinear instances,
 \eg
 approximately half an hour for \texttt{fl3795fst}.
Since \texttt{reduce=on} offers benefits without introducing overhead,
 it will always be enabled in the following.

\begin{figure}
\centering
 \includegraphics{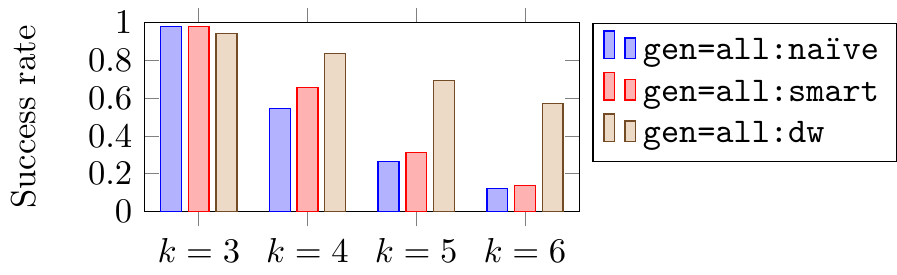}
 \caption{Success rates of component enumeration methods \texttt{gen=all}.}
 \label{fig:comp-enum}
\end{figure}

\begin{figure}
 \centering
 \subfigure[%
   Comparison between \texttt{gen=ondemand} and \texttt{voronoi},
    each with \texttt{save=static}.
  ]{\scalebox{0.75}{%
  \label{fig:z116-chart:ondemand-vs-voronoi}
  \includegraphics{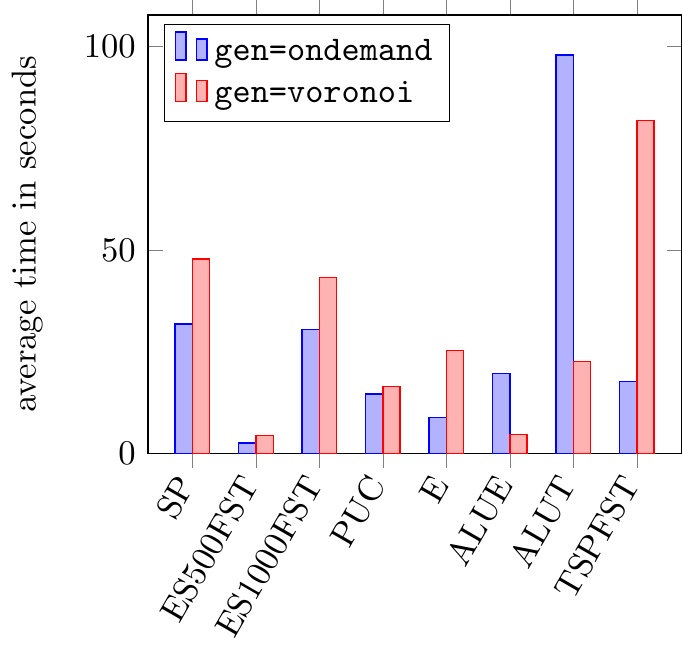}
 }}
 \qquad
 \subfigure[%
   Comparison between \texttt{save} strategies,
    each with $\win_\text{rel}$ and \texttt{gen=voronoi}.
  ]{\scalebox{0.75}{%
  \label{fig:z116-chart:save-comparison}
  \includegraphics{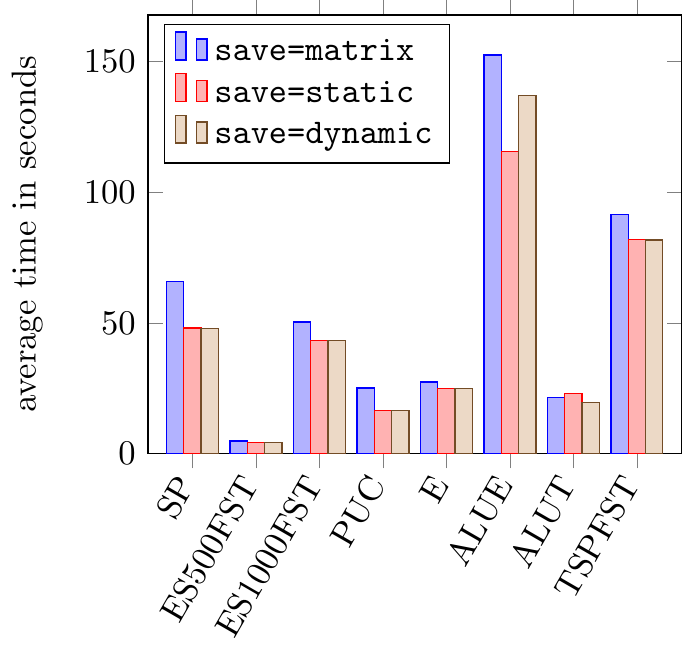}
 }}
 \caption{%
  Running times
   of GCF
   for different strategies,
   grouped by \SteinLib instance groups,
   averaged over the instances where all considered strategies succeed.
  Different choices of secondary strategies show similar tendencies.
  Omitted instance groups have negligible running times.
 }
 \label{fig:z116-chart}
\end{figure}

However, \texttt{gen=voronoi} is not the only way
 to generate full components for $\win_\text{abs}$ and $k=3$.
We can also use \texttt{gen=ondemand}.
This strategy
 is often but not always beneficial in comparison to \texttt{gen=voronoi}.
Let us, for example, consider GCF with $\win_\text{abs}$ and \texttt{save=static}.
(Numbers for other choices of \texttt{save} are similar.)
The average computation time decreases from 7.2 to 2.5 seconds.
However, extreme examples where one or the other choice is better, are
 \texttt{alut2625} with 192 seconds for \texttt{voronoi} and 849 seconds for \texttt{ondemand},
 and
 \texttt{fl3795fst} with 1\,782 and 148 seconds, respectively.
See Fig.~\ref{fig:z116-chart:ondemand-vs-voronoi} for a comparison by instance groups.
In general we might prefer \texttt{ondemand}
 but \texttt{voronoi} is the better choice for VLSI instances (like \texttt{ALUE}, \texttt{ALUT} and \texttt{LIN}).

We now compare the different strategies to compute \emph{save}.
Averaged over all instances,
 the times for \texttt{save=static} and \texttt{save=dynamic}
 are nearly the same.
Strategy \texttt{save=static} is never (significantly) worse than \texttt{save=matrix}:
 the average time decreases from 10.3 (\texttt{matrix}) to 8.6 seconds (\texttt{static}) for \texttt{gen=voronoi}
 (and from 6.6 to 4.2 seconds for \texttt{gen=ondemand}).
See Fig.~\ref{fig:z116-chart:save-comparison} for a more detailed view based on \SteinLib instance groups.
Since \texttt{static} is easier to implement than \texttt{dynamic},
 we recommend to use \texttt{save=static}.

Surprisingly,
 the strategy \texttt{singlepass=on}
 does not lead to notable time improvements
 in comparison to the original algorithm.
The solution becomes worse in
 33.3\,\% (28.4\,\%)
 and better in 22.7\,\% (18.8\,\%)
 of the instances
 using $\win_\text{abs}$ ($\win_\text{rel}$, respectively),
 deteriorating the solutions by 0.13\,\permil{} (0.28\,\permil) on average.
Hence,
 we will not consider \texttt{singlepass=on} in the following.

\begin{table}
 \tbl{%
  Comparison of combinatorial algorithms by instance groups.
  Per group, we give the
   total number of instances (`\#'),
   portion of optimally solved instances,
   average gaps,
   and average solution times.
  \label{table:gcf3-comparison}
 }{%
 \scalebox{0.6}{%
 \setlength\tabcolsep{2pt}
\begin{tabular}{|r|r||rrrrr|rrrrr||rrrrr|}
 \hline
  &  & \multicolumn{5}{c|}{Optimals\%} & \multicolumn{5}{c||}{Average gap\permil} & \multicolumn{5}{c|}{Average time in \textit{sec}} \\
 Group & \# & \algo{TM} & \algo{ACO} & \algo{AC3} & \algo{RC3} & \algo{LC3} & \algo{TM} & \algo{ACO} & \algo{AC3} & \algo{RC3} & \algo{LC3} & \algo{TM} & \algo{ACO} & \algo{AC3} & \algo{RC3} & \algo{LC3} \\
 \hline
     EuclidSparse &        15 &       33.3 & \bf   93.3 &       86.7 &       80.0 &       66.7 &        15.18 & \bf     1.92 &         1.99 &         2.00 &         5.27 &         0.00 & \bf     0.00 & \bf     0.00 & \bf     0.00 & \bf     0.00 \\
   EuclidComplete &        14 &        7.1 & \bf   78.6 & \bf   78.6 & \bf   78.6 & \bf   78.6 &        10.84 & \bf     0.35 & \bf     0.35 & \bf     0.35 & \bf     0.35 &         0.01 & \bf     0.08 & \bf     0.08 & \bf     0.08 &         3.98 \\
     RandomSparse &        96 &       27.1 &       41.7 &       40.6 & \bf   43.8 &       40.6 &        23.72 &        12.32 &        12.33 & \bf    10.93 &        18.10 &         0.01 & \bf     1.96 &         5.55 &         5.50 &        24.63 \\
   RandomComplete &        13 &       30.8 & \bf   61.5 & \bf   61.5 & \bf   61.5 &       53.8 &        22.34 & \bf    20.55 &        37.56 &        39.73 &        45.03 &         0.00 & \bf     0.01 & \bf     0.01 & \bf     0.01 &         0.06 \\
  IncidenceSparse &       320 &        3.8 &        4.4 &        4.4 & \bf    5.1 &        3.5 &       124.13 &        70.66 &        70.04 & \bf    68.94 &        75.16 &         0.00 & \bf     0.05 & \bf     0.03 & \bf     0.03 &         1.91 \\
IncidenceComplete &        80 &        0.0 &        0.0 &        0.0 &        0.0 &        0.0 &       384.29 & \bf   126.37 &       127.30 &       128.71 &       126.65 &         0.06 & \bf     0.26 & \bf     0.28 & \bf     0.29 &        40.90 \\
ConstructedSparse &        58 & \bf   25.9 & \bf   25.9 &       22.2 &       22.2 & \bf   25.9 & \bf    95.52 &       129.91 &       148.82 &       150.74 &       137.96 &         0.00 & \bf    17.02 &        20.85 &        20.96 &        73.93 \\
SimpleRectilinear &       218 &       11.5 &       17.4 & \bf   18.3 &       17.9 &       15.6 &        16.96 &         6.63 &         6.40 & \bf     6.23 &         9.91 &         0.00 & \bf     0.87 &        16.00 &        16.03 &        17.76 \\
  HardRectilinear &        54 &        0.0 &        0.0 &        0.0 &        0.0 &        0.0 &        22.86 &         8.58 &         8.47 & \bf     8.05 &        13.37 &         0.00 & \bf    30.27 &        59.82 &        59.83 &        58.10 \\
      VLSI / Grid &       207 &       10.6 & \bf   35.7 &       35.3 &       33.8 &       29.0 &        33.83 &         9.75 &         9.94 & \bf     9.74 &        24.17 &         0.00 &         5.99 & \bf     1.53 & \bf     1.53 &         2.12 \\
      WireRouting &       125 &        3.2 & \bf    4.0 &        2.4 &        3.2 &        3.2 & \bf     0.01 & \bf     0.01 & \bf     0.01 & \bf     0.01 &         0.02 &         0.00 &         0.14 & \bf     0.07 & \bf     0.07 &         0.60 \\
\hline
            Large &       187 &        1.8 &        5.4 &        5.4 & \bf    6.0 & \bf    6.0 &       230.86 & \bf    95.58 &        98.43 &        98.29 &       100.04 &         0.04 &        13.79 & \bf    10.72 & \bf    10.72 &        58.83 \\
        Difficult &       146 &        2.7 &        2.7 &        1.8 &        2.7 & \bf    3.6 &       206.46 & \bf   110.13 &       116.83 &       117.39 &       113.30 &         0.03 & \bf     7.55 &         8.89 &         8.94 &        60.16 \\
           NonOpt &        35 &        --- &        --- &        --- &        --- &        --- & \bf   134.88 &       153.86 &       169.94 &       170.53 &       156.79 &         0.01 & \bf    20.95 &        23.61 &        23.74 &       112.92 \\
\hline
      Coverage 10 &       588 &        9.2 & \bf   19.9 &       19.6 & \bf   19.9 &       16.4 &        73.05 &        38.69 &        39.11 & \bf    38.05 &        46.97 &         0.01 &         2.18 & \bf     0.59 & \bf     0.59 &         1.18 \\
      Coverage 20 &       159 &        9.6 & \bf   18.5 &       16.4 &       15.8 &       14.4 &       111.47 & \bf    57.03 &        57.89 &        59.16 &        61.03 &         0.00 &         1.13 & \bf     0.40 & \bf     0.40 &        22.89 \\
      Coverage 30 &       127 &        2.4 &        4.1 &        4.1 &        4.1 & \bf    5.7 &       186.25 & \bf    67.17 &        67.62 &        68.17 &        69.12 &         0.02 &         2.22 & \bf     1.94 & \bf     1.93 &        37.80 \\
      Coverage 40 &        92 &        1.1 & \bf    3.3 & \bf    3.3 & \bf    3.3 &        2.2 &        26.19 &         9.55 &         9.27 & \bf     8.82 &        15.60 &         0.00 & \bf     5.94 &         8.51 &         8.52 &         8.19 \\
      Coverage 50 &       135 &        5.5 &       18.1 &       18.1 & \bf   18.9 &       16.5 & \bf    23.80 &        27.40 &        32.54 &        32.82 &        30.03 &         0.00 & \bf     5.04 &         9.23 &         9.23 &        13.36 \\
      Coverage 60 &        45 &       11.1 &       26.7 & \bf   31.1 &       24.4 &       24.4 &        18.88 & \bf    16.27 &        17.64 &        18.33 &        18.31 &         0.00 & \bf    20.49 &        36.58 &        36.66 &        36.08 \\
      Coverage 70 &        18 &       44.4 & \bf   50.0 &       44.4 &       44.4 &       38.9 &         6.37 & \bf     1.98 &         2.44 &         2.14 &         3.76 &         0.00 & \bf     4.91 &        24.19 &        24.07 &        23.32 \\
      Coverage 80 &        11 & \bf   54.5 & \bf   54.5 & \bf   54.5 & \bf   54.5 & \bf   54.5 &         5.12 &         2.36 & \bf     2.32 &         2.43 &         2.45 &         0.00 & \bf    21.36 &       222.85 &       223.27 &       259.47 \\
      Coverage 90 &        14 &       21.4 &       28.6 &       28.6 &       35.7 & \bf   42.9 &         3.87 &         0.70 &         0.78 &         0.69 & \bf     0.68 &         0.00 & \bf     0.99 &        23.76 &        23.69 &        23.09 \\
     Coverage 100 &        11 &       54.5 &       63.6 &       63.6 & \bf   72.7 &       63.6 &         0.94 & \bf     0.07 &         0.25 &         0.27 &         0.23 &         0.00 & \bf     0.93 &       116.49 &       116.70 &       112.91 \\
\hline
              All &      1200 &        9.1 & \bf   18.1 &       17.8 &       17.9 &       15.7 &        76.06 & \bf    38.23 &        39.22 &        38.95 &        43.91 &         0.01 & \bf     3.54 &         7.24 &         7.24 &        15.02 \\
\hline
\end{tabular}
 }}
\end{table}

\paragraph{Comparison.}
We compare
 the following algorithms for $k=3$:
 GCF with $\win_\texttt{abs}$ and \texttt{gen=voronoi} (\algo{AC3}),
 GCF with $\win_\texttt{abs}$ and \texttt{gen=ondemand} (\algo{ACO}),
 GCF with $\win_\texttt{rel}$ (\algo{RC3}),
 and
 LCA with $\win_\texttt{loss}$ (\algo{LC3}).
Later we will also consider \algo{AC$k$}, \algo{RC$k$}, and \algo{LC$k$}
 for $k \geq 4$
 where \texttt{gen=all:dw} will be used.
Success rates are
 99.7\,\% for \algo{ACO}
 and 99.5\,\% for \algo{AC3}, \algo{RC3}, and \algo{LC3}.
We observe that these success rates are clearly dominated by the full component generation method.
See Table~\ref{table:gcf3-comparison} for a comparison
 of solution quality and time consumption
 based on our instance groups.
With respect to solution quality,
 all algorithms are worthwile:
 in comparison to \algo{TM},
 the average gap halves
 and the number of optimally solved instances doubles.
Noteworthy exceptions are
 \emph{RandomComplete}
  where only \algo{ACO} provides (slightly) better average gaps than \algo{TM},
 and
 \emph{ConstructedSparse},
  where \algo{TM} provides the best average gaps.

Among the strong algorithms \algo{ACO}, \algo{AC3}, \algo{RC3}, and \algo{LC3},
 there is no clear winner
 regarding solution quality
 for a majority of the instances.
\algo{RC3} is better than \algo{AC3} on average
 within almost identical running time.
\algo{ACO} is almost always better than \algo{AC3};
 if not, it is only slightly worse.
We emphasize that
 \algo{ACO} becomes---in comparison---significantly faster
 for increasing terminal coverage,
 and outperforms the other algorithms already for $|R|/|V| > 0.2$.
Although \algo{LC3} takes significantly more time
 than the other algorithms,
 the obtained solution quality is not significantly better.
\algo{ACO} offers the best compromise between time and solution quality.

\subsection{Evaluation of the LP-based Algorithm}
\label{sec:exp-eval:lpbased}

The main stages of the LP-based algorithm are
 (1)~the full component enumeration,
 (2)~solving the LP relaxation,
 and
 (3)~the approximation based on the fractional LP solution.
We have already evaluated the strategies for~(1) in Section~\ref{sec:exp-eval:comp-enum},
 and will now evaluate the different strategies for the remaining stages,
 for $k = 3$.

\paragraph{Solving the LP relaxation.}
Fig.~\ref{fig:separation-performance}
 shows that
 \texttt{consep=on}
 is clearly beneficial.
Together with
 \texttt{presep=ondemand},
 it allows us to compute the LP solutions
 for 89.0\,\% of the instances;
 a slight improvement over the 88.9\,\% with \texttt{presep=initial}.
We hence perform all further experiments using
 connectivity tests and by separating constraints~\eqref{eq:ser:yv}.

\begin{figure}
\centering
 \includegraphics{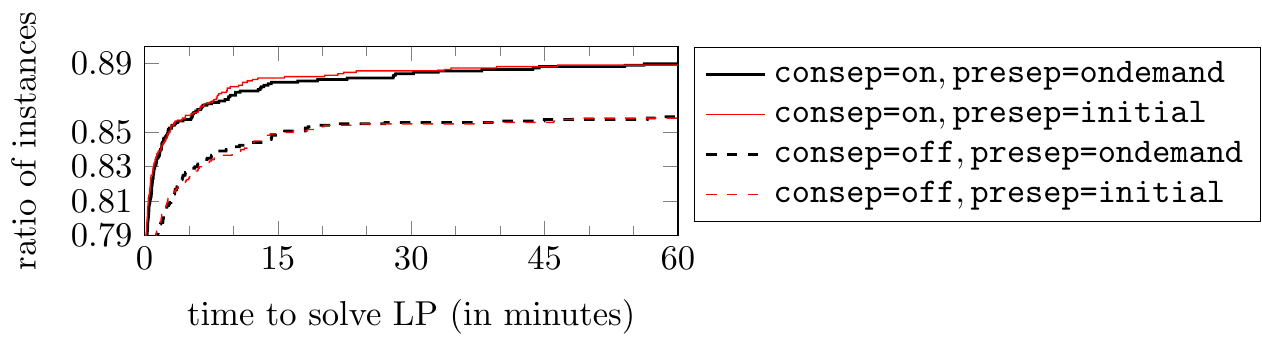}
\caption{Percentage of instances whose LP relaxations for $k=3$ are solved within the given time
 for different variants.}
\label{fig:separation-performance}
\end{figure}

\paragraph{Strategies.}
Using the original algorithm,
 all instances that pass stage (2) also pass stage (3) of the algorithm.
That leads to the assumption that stages (1) and (2) are the dominating stages.

The distribution of time in the three stages
 is very different for different instances.
We consider the 133 instances with more than 10~seconds computation time.
On average, we spend 8.7\,\% of the time in~(1),
 90.4\,\% in~(2)
 and 0.8\,\% in~(3).
Stage~(1) dominates in 7.5\,\% of the instances.
An extreme example is \texttt{u2152fst}
 where the algorithm spends 280 seconds in~(1),
  40~seconds in~(2),
  and 85~seconds in~(3).
In the remaining 92.5\,\% of the instances, stage~(2) dominates.
There are many instances, especially in the \texttt{*FST} set,
 with a long running time in~(2)
 but negliglible running times for~(1) and~(3);
 the most extreme example is \texttt{linhp318fst}
 spending 57 minutes in~(2).
Stage~(3) never dominates.
The most extreme example is~\texttt{d2103fst},
 where it spends 4 minutes in~(1),
 36 minutes in~(2)
 and almost 2 minutes in~(3).
This means that%
 ---as for the previous combinatorial algorithms---%
 the \emph{actual} approximation algorithm
 needs the least of the whole running time.

The strategy \texttt{prune=on}
 achieves that the time of stage (3) becomes negligible for \emph{every} instance.
Although the overall effect of that strategy may be considered rather limited
 (since usually most of the time is not spent in the last stage anyhow),
 we apply this strategy in the following.

\begin{table}
 \tbl{%
  Comparison of algorithms by instance groups.
  Per group, we give the
   total number of instances (`\#'),
   success rates,
   portion of optimally solved instances,
   average gaps,
   portion of instances where \algo{LP3} obtained a worse or equal solution than \algo{TM} or \algo{ACO},
   and average solution times.
  \label{table:lp3-comparison}
 }{%
 \scalebox{0.6}{%
 \setlength\tabcolsep{2pt}
\begin{tabular}{|r|r||rrr|rrr|rrr|rr||rrr|}
 \hline
  &  & \multicolumn{3}{c|}{Success\%} & \multicolumn{3}{c|}{Optimals\%} & \multicolumn{3}{c|}{Average gap\permil} & \multicolumn{2}{c||}{\algo{LP3}$\geq$} & \multicolumn{3}{c|}{Avg. time in \textit{sec}} \\
 Group & \# & \algo{ACO} & \algo{LP3} & \algo{BC} & \algo{TM} & \algo{ACO} & \algo{LP3} & \algo{TM} & \algo{ACO} & \algo{LP3} & \algo{TM} & \algo{ACO} & \algo{ACO} & \algo{LP3} & \algo{BC} \\
 \hline
     EuclidSparse &       15 &     100.0 &     100.0 &     100.0 &      33.3 & \bf  93.3 &      80.0 &       15.18 & \bf    1.92 &        3.31 &      33.3 & \bf  93.3 &        0.00 & \bf    0.02 &        0.29 \\
   EuclidComplete &       14 &     100.0 & \bf 100.0 &      92.9 &       7.1 & \bf  78.6 & \bf  78.6 &       10.84 & \bf    0.35 & \bf    0.35 &       7.1 & \bf 100.0 &        0.01 & \bf    0.12 &       21.29 \\
     RandomSparse &       96 &     100.0 &      75.0 & \bf 100.0 &      27.1 & \bf  41.7 &      33.3 &       23.62 & \bf   13.16 &       17.01 &      66.7 & \bf  92.7 &        0.03 &       88.36 & \bf    0.92 \\
   RandomComplete &       13 &     100.0 &     100.0 &     100.0 &      30.8 & \bf  61.5 &      46.2 &       22.34 & \bf   20.55 &       37.14 & \bf  76.9 & \bf 100.0 &        0.01 & \bf    0.36 &        6.78 \\
  IncidenceSparse &      320 &     100.0 & \bf  93.8 &      88.1 &       3.8 & \bf   4.4 &       3.2 &      118.71 & \bf   70.04 &       75.35 &      44.1 &      69.7 &        0.01 & \bf    2.10 &      120.22 \\
IncidenceComplete &       80 &     100.0 & \bf  93.8 &      81.2 &       0.0 &       0.0 &       0.0 &      378.80 & \bf  125.72 &      140.93 &       6.2 & \bf  85.0 &        0.14 & \bf    0.61 &      129.74 \\
ConstructedSparse &       58 &     100.0 & \bf  67.2 &      25.9 & \bf  25.9 & \bf  25.9 &      18.5 & \bf   84.88 &      108.32 &      132.50 & \bf  87.9 & \bf  86.2 &        0.01 & \bf    7.05 &      147.98 \\
SimpleRectilinear &      218 &      99.5 &      95.9 & \bf  99.5 &      11.5 &      17.4 & \bf  21.1 &       17.16 &        6.69 & \bf    5.03 &      19.3 &      45.0 &        0.07 &       77.93 & \bf    0.42 \\
  HardRectilinear &       54 &      96.3 &       3.7 & \bf  94.4 &       0.0 &       0.0 &       0.0 &       21.99 &        8.57 & \bf    2.57 & \bf  96.3 & \bf  96.3 &        0.30 &     3321.97 & \bf    5.88 \\
      VLSI / Grid &      207 &      99.5 & \bf  98.6 &      86.5 &      10.6 & \bf  35.7 & \bf  35.7 &       33.82 & \bf    9.73 &       11.04 &      20.3 & \bf  83.6 &        0.05 & \bf    2.28 &      159.90 \\
      WireRouting &      125 &     100.0 & \bf 100.0 &      89.6 &       3.2 & \bf   4.0 &       3.2 & \bf    0.01 & \bf    0.01 & \bf    0.01 &      51.2 &      57.6 &        0.08 & \bf   13.85 &      182.54 \\
\hline
            Large &      187 &      97.9 & \bf  78.6 &      51.3 &       1.8 & \bf   5.4 &       4.8 &      241.62 & \bf   93.49 &      101.69 &      31.0 & \bf  85.6 &        0.18 & \bf    2.34 &      292.48 \\
        Difficult &      146 &      98.6 & \bf  74.0 &      25.3 & \bf   2.7 & \bf   2.7 &       1.8 &      198.75 & \bf   98.15 &      106.51 &      56.8 & \bf  80.8 &        0.18 & \bf   20.49 &     1192.83 \\
           NonOpt &       35 &     100.0 & \bf  42.9 &       0.0 &       --- &       --- &       --- & \bf  115.58 &      141.40 &      159.32 & \bf  88.6 & \bf  85.7 &         --- &         --- &         --- \\
\hline
      Coverage 10 &      588 &      99.8 & \bf  99.3 &      87.9 &       9.2 & \bf  19.9 &      19.2 &       73.03 & \bf   38.37 &       43.31 &      37.2 &      74.8 &        0.05 & \bf    3.30 &      124.23 \\
      Coverage 20 &      159 &     100.0 & \bf  91.2 &      84.9 &       9.6 & \bf  18.5 &      11.6 &      112.95 & \bf   55.65 &       62.30 &      47.2 & \bf  78.6 &        0.03 & \bf   12.07 &      104.40 \\
      Coverage 30 &      127 &      99.2 &      71.7 & \bf  76.4 &       2.4 & \bf   4.1 &       3.3 &      183.84 & \bf   66.71 &       69.44 &      39.4 & \bf  79.5 &        0.02 & \bf   13.62 &      193.39 \\
      Coverage 40 &       92 &      98.9 &      65.2 & \bf  98.9 &       1.1 &       3.3 & \bf   4.3 &       26.70 &        9.76 & \bf    7.78 &      35.9 &      55.4 &        0.04 &       81.67 & \bf    0.86 \\
      Coverage 50 &      135 &     100.0 &      83.7 & \bf  91.1 &       5.5 & \bf  18.1 &      15.7 &       20.92 & \bf   17.11 &       17.98 &      32.6 &      54.8 &        0.03 &      138.78 & \bf    0.88 \\
      Coverage 60 &       45 &     100.0 &      82.2 & \bf  93.3 &      11.1 &      26.7 & \bf  31.1 &       11.57 & \bf    6.17 &        6.42 &      37.8 &      66.7 &        0.00 &       52.41 & \bf    1.94 \\
      Coverage 70 &       18 &     100.0 &      66.7 & \bf 100.0 &      44.4 & \bf  50.0 &      44.4 &        2.27 & \bf    0.68 &        2.20 & \bf  77.8 & \bf  88.9 &        0.00 &        0.42 & \bf    0.07 \\
      Coverage 80 &       11 &     100.0 &      72.7 & \bf  90.9 & \bf  54.5 & \bf  54.5 & \bf  54.5 &        2.22 &        0.55 & \bf    0.38 & \bf  81.8 & \bf  81.8 &        0.03 &       50.70 & \bf    0.07 \\
      Coverage 90 &       14 &      92.9 &      57.1 & \bf 100.0 &      21.4 &      28.6 & \bf  42.9 &        3.60 &        0.53 & \bf    0.09 &      64.3 & \bf  85.7 &        0.06 &       69.26 & \bf    0.10 \\
     Coverage 100 &       11 &     100.0 &      90.9 & \bf 100.0 &      54.5 &      63.6 & \bf  90.9 &        0.94 &        0.07 & \bf    0.00 &      63.6 &      72.7 &        0.93 &      486.39 & \bf    0.17 \\
\hline
              All &     1200 &      99.7 & \bf  89.0 &      88.2 &       9.1 & \bf  18.1 &      17.2 &       75.13 & \bf   36.81 &       40.65 &      39.8 &      72.2 &        0.05 & \bf   33.09 &       96.07 \\
\hline
\end{tabular}
 }}
\end{table}

The strategy \texttt{stronger=on} might not be as worthwhile:
 the success rate drops to 88.8\,\%.
However, 68.0\,\% of the solved LP relaxations are solved integrally with \texttt{stronger=on}
 whereas it is only 50.0\,\% for the original LP relaxation.
Although this sounds promising,
 the final solution of the majority (74.2\,\%) of the instances does not change
 and the solution value increases (becomes worse) for 17.6\,\% of the instances.
A harsh example is \texttt{i160-043}
 where the fractional LP solution increases by $1.5$
 but the integral approximation increases from 1\,549 to 1\,724.
Only 8.2\,\% of the solutions improve.
A good example is \texttt{i080-003}
 where the fractional LP solution 1\,902 increases to the (integral) LP solution 1\,903
 and yields an improvement from 1\,807 to the optimum Steiner tree value 1\,713.
We mention these examples
 to conclude that the observed integrality gap
 of the relaxation's solution
 seems secondary.
The primary influence for the solution quality seems to be the actual choice of full components in the fractional solution,
 and the choice of core edges.
We hence refrain from using \texttt{stronger=on}
 in the following and cannot recommend it.

We now evaluate \texttt{bound=on}.
The success rate increases to 92.2\,\%.
LP solving times improve for 43.1\,\% of the instances
 and deteriorate for 10.8\,\%,
 solutions improve for 49.5\,\%
 and deteriorate for 19.3\,\%.
This behavior is due to the fact that
 the LP solver hits the bound on 67.8\,\% of the successful instances
 and then just returns the $2$-approximation.
---
 Although \texttt{bound=on} is worthwhile in practice, we
will not use it in the following comparison.
It can be seen as an aggregation to 
combine two distinct methods, whereas in the following, we want to 
see a clearer picture on the differences between the various methods.

\paragraph{Comparison.}
We compare the LP-based approximation algorithm~(\algo{LP$k$})
 with $k=3$
 to the recommended $2$-approximation \algo{TM},
 the $11/6$-approximation \algo{ACO},
 and an exact algorithm:
 \algo{BC}, a highly-tuned branch-and-cut approach
 presented by \citet{FLLLMRSS15}.
\algo{BC} has been one of the winners of the \citet{DIMACS14Bounds}.
It is based on an integer linear program
 that%
 ---using a branch-and-cut framework---%
 is arguably much easier to implement than,
 \eg
 the sophisticated strong approximation algorithms.

See Table~\ref{table:lp3-comparison} for a comparison based on our instance groups.
In comparison to \algo{TM} only,
 \algo{LP3} achieves a significantly better solution quality
  for most of the instances it can solve,
  and the average running times may be justifiable.
However,
 the much simpler algorithm \algo{ACO}
 is clearly better in terms of time \emph{and} solution quality:
 its solutions are not worse than \algo{LP3} solutions
  in 72.2\,\% of the instances.
\algo{BC} fails for insignificantly more instances than \algo{LP3}.
The results suggest
 to use \algo{BC} for rectilinear instances
 and instances with terminal coverage $|R|/|V| \geq 0.4$.

\subsection{Higher $k$}

Finally,
 we consider the strong approximation algorithms for higher $k$.
  see Table~\ref{table:higher-k-comparison}.
The success rates for $k = 6$ drop below 60\,\%
 which can be considered clearly impractical.
Hence we compare the key performance indicators for
 the instances where \algo{AC$k$}, \algo{RC$k$}, \algo{LC$k$}, and \algo{LP$k$}
 succeed for $k = 3,4,5$.
(Recall that \algo{ACO} does not generalize to $k > 3$.)
All algorithms improve their solution qualities with increasing $k$.
This is surprising for \algo{AC$k$}
 since it is only proven to be a $11/6$-approximation for any $k \geq 3$.
On the other hand,
 $11/6$ is smaller than the theoretically proven bounds of the other algorithms for $k \leq 5$.
\algo{AC$k$}, \algo{RC$k$}, and \algo{LP$k$}
 are comparable regarding their average gaps;
 only \algo{LC$k$} turns out to be worse than the others.
\algo{RC$k$} always achieves the smallest gaps
 and also has the smallest or reasonable small running times.
This is remarkable since,
 according to theoretical bounds,
 \algo{RC$k$} is the worst choice for $k \leq 18$.
\algo{LP$k$} has the highest chances to find the optimum.
However, if it fails to find the optimum,
 the solution is either quite weak (worse gaps than \algo{AC$k$} and \algo{RC$k$})
 or non-existent (worst success rates).

In Table~\ref{table:k4-comparison},
 we give a more in-depth look at the solution qualities of the algorithms with $k=3$ and $k=4$ only,
 since $k=4$ still offers good success rates.
Interestingly,
 for \algo{AC$k$}, \algo{RC$k$}, and \algo{LC$k$},
 the \emph{WireRouting} and \emph{NonOpt} solutions become slightly worse
 when we increase $k$ from $3$ to $4$
We can also see that the algorithms perform bad on high-coverage instances
 regarding success rates.
However, if successful, the solution quality is good,
 \eg
 \algo{LP4} is already a good choice for terminal coverage $|R|/|V| > 0.3$.

\begin{table}
 \tbl{%
  Comparison of strong approximation algorithms for higher $k$.
  Per $k$, we give the success rates for each algorithm,
   and then%
   ---limited to the 787 instances that could be solved by all algorithms with $k \in \set{3,4,5}$---%
   the portion of optimally solved instances,
   the average gaps,
   and the average running time.
 \label{table:higher-k-comparison}
 }{%
 \scalebox{0.7}{%
 \setlength\tabcolsep{2pt}
\begin{tabular}{|r||rrrr|rrrr|rrrr||rrrr|}
 \hline
       & \multicolumn{4}{c|}{Success\%}
       & \multicolumn{4}{c|}{Optimals\%}
       & \multicolumn{4}{c||}{Average gap\permil}
       & \multicolumn{4}{c|}{Average time in \textit{sec}}
       \\
 Alg   & \algo{AC$k$} & \algo{RC$k$} & \algo{LC$k$} & \algo{LP$k$}
       & \algo{AC$k$} & \algo{RC$k$} & \algo{LC$k$} & \algo{LP$k$}
       & \algo{AC$k$} & \algo{RC$k$} & \algo{LC$k$} & \algo{LP$k$}
       & \algo{AC$k$} & \algo{RC$k$} & \algo{LC$k$} & \algo{LP$k$}
       \\
 \hline
 $k=3$ & \bf     99.5 & \bf     99.5 & \bf     99.5 &         89.0
       & \bf     23.8 &         23.5 &         20.7 &         22.6
       &        38.99 & \bf    38.57 &        44.92 &        43.08
       & \bf     0.02 & \bf     0.02 &         0.08 &         0.16
 \\
 $k=4$ & \bf     85.2 & \bf     85.2 & \bf     85.2 &         77.2
       &         29.1 &         29.9 &         24.0 & \bf     38.2
       &        26.63 & \bf    25.89 &        34.16 &        26.90
       & \bf     9.77 &         9.97 & \bf     9.73 &        10.60
 \\
 $k=5$ & \bf     70.8 & \bf     70.8 &         70.6 &         65.7
       &         30.6 &         32.1 &         24.9 & \bf     46.3
       &        20.41 & \bf    19.42 &        30.22 &        20.75
       &       131.07 & \bf   128.24 &       129.08 &       130.89
 \\
 $k=6$ & \bf     57.4 & \bf     57.4 &         57.2 &         54.3
 & & & &
 & & & &
 & & & &
 \\
 \hline
\end{tabular}
 }}
\end{table}

\section{Conclusion}

We considered the strong approximation algorithms
 for the Steiner tree problem (STP)
 with an approximation ratio below~2.
While there has been many theoretical advances over the
 last decades \wrt\ the approximation ratio,
 their practical applicability and strength has never been considered.
In particular,
 all these algorithms use the tool of \emph{$k$-restricted components}
 as a central ingredient to achieve astonishing approximation ratios for $k\to\infty$,
 while the runtime is exponentially dependent on~$k$.
The concept hence turned out to be a main stumbling block in real-world applications
 since they are both time- and memory-consuming.
This paper is an attempt to show the importance of the research field
 \emph{algorithm engineering}.
Amongst other findings,
 we pinpoint further worthwhile research questions
 both from the theoretical and the practical point of view,
 and hope to increase the awareness for the necessity to complement high-level theoretical research
 with practical considerations,
 in order to ensure a certain degree of `groundedness' of the theory.

\begin{table}
 \tbl{%
  Comparison of solution qualities obtained by algorithms for $k=3$ and $k=4$.
  Per instance group, we give the
   number of instances (`\#')
    that could be solved by all considered algorithms,
   portion of optimally solved instances,
   and average gaps.
 \label{table:k4-comparison}
 }{%
 \scalebox{0.6}{%
 \setlength\tabcolsep{2pt}
\begin{tabular}{|r|rr||rr|rr|rr|rr||rr|rr|rr|rr|}
 \hline
       &    &    & \multicolumn{8}{c||}{Optimals\%} & \multicolumn{8}{c|}{Average gap\permil} \\
 Group & \# & \% & \algo{AC3} & \algo{AC4} & \algo{RC3} & \algo{RC4} & \algo{LC3} & \algo{LC4} & \algo{LP3} & \algo{LP4} & \algo{AC3} & \algo{AC4} & \algo{RC3} & \algo{RC4} & \algo{LC3} & \algo{LC4} & \algo{LP3} & \algo{LP4} \\
 \hline
     EuclidSparse  &   15 & 100   &     86.7 &     86.7 &     80.0 &     80.0 &     66.7 &     73.3 &     80.0 &\bf 100   &       1.99 &       1.59 &       2.00 &       1.40 &       5.27 &       4.95 &       3.31 & \bf   0.00 \\
     EuclidCompl.  &   13 &  92.9 &     84.6 &\bf 100   &     84.6 &\bf 100   &     84.6 &\bf 100   &     84.6 &\bf 100   &       0.36 & \bf   0.00 &       0.36 & \bf   0.00 &       0.36 & \bf   0.00 &       0.36 & \bf   0.00 \\
      Rand.Sparse  &   61 &  63.5 &     57.4 &\bf  72.1 &     60.7 &     70.5 &     57.4 &     57.4 &     50.8 &\bf  72.1 &      13.26 &       8.63 &      11.08 &       7.69 &      21.58 &      16.78 &      17.74 & \bf   7.17 \\
      Rand.Compl.  &   12 &  92.3 &\bf  66.7 &     58.3 &\bf  66.7 &     50.0 &     58.3 &\bf  66.7 &     50.0 &\bf  66.7 &      26.20 &      22.66 &      28.55 & \bf  19.70 &      35.19 &      27.24 &      26.64 &      23.12 \\
     Incid.Sparse  &  284 &  88.8 &      4.9 &      6.7 &      5.6 &      8.1 &      3.9 &      6.0 &      3.5 &\bf   9.2 &      68.51 &      50.18 &      67.29 & \bf  49.37 &      74.24 &      58.49 &      74.67 &      50.06 \\
     Incid.Compl.  &   71 &  88.8 &      0.0 &      0.0 &      0.0 &      0.0 &      0.0 &      0.0 &      0.0 &      0.0 &     125.76 &      78.01 &     127.43 & \bf  77.56 &     125.53 & \bf  77.52 &     140.58 &      95.26 \\
    Constr.Sparse  &   25 &  43.1 &     28.6 &\bf  33.3 &     28.6 &\bf  33.3 &\bf  33.3 &     23.8 &     23.8 &     28.6 &     100.90 & \bf  62.78 &     105.25 &      66.40 &      96.28 &      78.39 &     107.71 &      68.95 \\
      SimpleRect.  &  175 &  80.3 &     18.9 &     21.1 &     17.1 &     22.3 &     14.9 &     16.6 &     20.0 &\bf  47.4 &       6.69 &       5.08 &       6.55 &       3.82 &      10.35 &       8.14 &       5.21 & \bf   1.27 \\
      VLSI / Grid  &  180 &  87.0 &     40.6 &     52.2 &     38.9 &     53.3 &     33.3 &     42.2 &     41.1 &\bf  67.2 &       9.36 &       5.17 &       9.06 &       4.39 &      23.22 &      20.29 &      10.56 & \bf   2.49 \\
      WireRouting  &   91 &  72.8 &      3.3 &      5.5 &      4.4 &\bf   6.6 &      4.4 &      3.3 &      4.4 &\bf   6.6 &       0.01 &       0.02 &       0.01 &       0.02 &       0.02 &       0.02 &       0.01 &       0.01 \\
\hline
            Large  &  112 &  11.4 &      6.5 &      9.3 &      7.4 &      9.3 &      7.4 &      6.5 &      7.4 &\bf  12.0 &     103.13 &      63.07 &     102.49 & \bf  61.82 &     105.14 &      67.84 &     110.40 &      73.51 \\
        Difficult  &   63 &  43.2 &      3.4 &      5.1 &      5.1 &      5.1 &\bf   6.8 &      1.7 &      3.4 &      5.1 &     112.14 & \bf  75.27 &     112.47 & \bf  75.30 &     111.28 &      79.14 &     115.25 &      80.67 \\
           NonOpt  &    4 &  59.9 &      --- &      --- &      --- &      --- &      --- &      --- &      --- &      --- & \bf  94.28 &     107.82 &      96.42 &     108.43 &     103.94 &     114.57 &     101.74 & \bf  94.45 \\
\hline
      Coverage 10  &  525 &  89.3 &     21.7 &     27.8 &     22.1 &     28.2 &     18.2 &     22.3 &     21.3 &\bf  34.5 &      40.88 &      28.21 &      39.65 & \bf  27.01 &      48.90 &      37.93 &      45.89 & \bf  27.75 \\
      Coverage 20  &  129 &  81.1 &     18.6 &     19.4 &     17.8 &     21.7 &     16.3 &     16.3 &     13.2 &\bf  22.5 &      52.14 & \bf  36.72 &      53.80 &      37.64 &      55.55 &      42.71 &      58.16 &      41.33 \\
      Coverage 30  &   66 &  52.0 &      7.6 &     10.6 &      7.6 &\bf  12.1 &     10.6 &\bf  12.1 &      6.1 &     10.6 &      62.92 & \bf  42.80 &      64.10 &      44.78 &      65.63 &      44.79 &      65.07 &      48.54 \\
      Coverage 40  &   52 &  56.5 &      5.8 &      9.6 &      5.8 &      7.7 &      3.8 &      3.8 &      7.7 &\bf  21.2 &       9.23 &       6.51 &       8.86 &       5.34 &      15.70 &      11.71 &       7.59 & \bf   2.01 \\
      Coverage 50  &   97 &  71.9 &     19.6 &     22.7 &     19.6 &     25.8 &     17.5 &     19.6 &     19.6 &\bf  47.4 &      20.31 &      11.62 &      20.70 &      10.24 &      21.24 &      15.07 &      18.18 & \bf   8.89 \\
      Coverage 60  &   35 &  77.8 &     40.0 &     40.0 &     31.4 &     37.1 &     31.4 &     31.4 &     40.0 &\bf  77.1 &       3.39 &       4.44 &       4.39 &       3.11 &       4.04 &       2.85 &       2.06 & \bf   0.58 \\
      Coverage 70  &   12 &  66.7 &     66.7 &\bf  91.7 &     66.7 &     83.3 &     58.3 &     83.3 &     66.7 &\bf  91.7 &       0.92 &       0.44 &       0.74 &       0.24 &       2.57 &       2.43 &       2.20 & \bf   0.01 \\
      Coverage 80  &    6 &  54.5 &    100   &    100   &    100   &    100   &    100   &    100   &    100   &    100   &       0.00 &       0.00 &       0.00 &       0.00 &       0.00 &       0.00 &       0.00 &       0.00 \\
      Coverage 90  &    4 &  28.6 &     75.0 &     75.0 &     75.0 &     75.0 &\bf 100   &     75.0 &\bf 100   &\bf 100   &       0.49 &       0.49 &       0.49 &       0.49 & \bf   0.00 &       0.49 & \bf   0.00 & \bf   0.00 \\
     Coverage 100  &    1 &   9.1 &    100   &    100   &    100   &    100   &    100   &    100   &    100   &    100   &       0.00 &       0.00 &       0.00 &       0.00 &       0.00 &       0.00 &       0.00 &       0.00 \\
\hline
              All  &  927 &  77.2 &     21.2 &     25.9 &     21.0 &     26.5 &     18.5 &     21.3 &     20.4 &\bf  34.9 &      37.67 &      25.89 &      37.35 & \bf  25.22 &      43.39 &      32.99 &      41.15 &      25.99 \\
\hline
\end{tabular}
 }}
\end{table}

For each strong approximation algorithm,
 we implemented the most promising variants,
 both of combinatorial and LP-based nature.
Thereby,
 we identified several areas to improve or extend the known algorithms either theoretically (\eg
  extending the applicability of the \texttt{gen=voronoi} strategy) or practically.
We conducted a large study of the different algorithms and their variants,
 and compared them to simple $2$-approximations and an exact algorithm.

The choice of $k=3$ turns out to be practical;
 there, the simplest and oldest below-2 approximation%
 ---Zelikovsky's $11/6$ approximation, combined with a direct $3$-restricted component generation~\citep{Z93:IPL}---%
 offers the best compromise
 between time consumption and solution quality in practice.
For higher $k$, the `relative greedy heuristic'~\citep{Z95}
 seems to be a viable choice \wrt\ solution quality.
This is surprising since the loss-contracting algorithm~\citep{RZ05}
 and the LP-based algorithm~\citep{GORZ12}
 provide better theoretical bounds.

In order to make the strong approximations more practical for higher $k$,
 it seems inevitable to find a way to significantly decrease the number
of considered full components.
 This could, \eg
 be achieved by a more efficient generation scheme
  (perhaps a generalization of the direct full component generation to $k \geq 4$),
 by removing dominated full components,
 or %
 by starting with a small number of full components and constructing further ones only when it seems fit.
All the above approaches clearly deserve further in-depth studies, both from theory and practice.

\paragraph{Acknowledgement.}
We are grateful to Matthias Woste for initial implementations.

\bibliographystyle{ACM-Reference-Format-Journals}
\bibliography{refs}


\begin{thebibliography}{00}


\ifx \showCODEN    \undefined \def \showCODEN     #1{\unskip}     \fi
\ifx \showDOI      \undefined \def \showDOI       #1{{\tt DOI:}\penalty0{#1}\ }
  \fi
\ifx \showISBNx    \undefined \def \showISBNx     #1{\unskip}     \fi
\ifx \showISBNxiii \undefined \def \showISBNxiii  #1{\unskip}     \fi
\ifx \showISSN     \undefined \def \showISSN      #1{\unskip}     \fi
\ifx \showLCCN     \undefined \def \showLCCN      #1{\unskip}     \fi
\ifx \shownote     \undefined \def \shownote      #1{#1}          \fi
\ifx \showarticletitle \undefined \def \showarticletitle #1{#1}   \fi
\ifx \showURL      \undefined \def \showURL       #1{#1}          \fi

\bibitem[\protect\citeauthoryear{Aneja}{Aneja}{1980}]%
        {A80}
{Y.~P. Aneja}. 1980.
\newblock \showarticletitle{{An Integer Linear Programming Approach to the
  Steiner Problem in Graphs}}.
\newblock {\em Networks\/} {10}, 2 (1980), 167--178.
\newblock
\showISSN{1097-0037}


\bibitem[\protect\citeauthoryear{Bender and Farach-Colton}{Bender and
  Farach-Colton}{2000}]%
        {BF00}
{M.~A. Bender} {and} {M. Farach-Colton}. 2000.
\newblock \showarticletitle{{The LCA Problem Revisited}}. In {\em Proc.\ of
  LATIN\,2000} {\em (LNCS)}, Vol. 1776. Springer, 88--94.
\newblock


\bibitem[\protect\citeauthoryear{Berman and Ramaiyer}{Berman and
  Ramaiyer}{1994}]%
        {BR94}
{P. Berman} {and} {V. Ramaiyer}. 1994.
\newblock \showarticletitle{{Improved Approximations for the Steiner Tree
  Problem}}.
\newblock {\em Journal of Algorithms\/} {17}, 3 (1994), 381--408.
\newblock


\bibitem[\protect\citeauthoryear{Bern and Plassmann}{Bern and
  Plassmann}{1989}]%
        {BP89}
{M. Bern} {and} {P. Plassmann}. 1989.
\newblock \showarticletitle{{The Steiner Problem with Edge Lengths 1 and 2}}.
\newblock {\it Inform. Process. Lett.} {32}, 4 (1989), 171--176.
\newblock
\showISSN{0020-0190}


\bibitem[\protect\citeauthoryear{Beyer and Chimani}{Beyer and Chimani}{2014}]%
        {BC14}
{S. Beyer} {and} {M. Chimani}. 2014.
\newblock \showarticletitle{{Steiner Tree 1.39-Approximation in Practice}}. In
  {\em MEMICS 2014} {\em (LNCS)}, Vol. 8934. 60--72.
\newblock


\bibitem[\protect\citeauthoryear{Borchers and Du}{Borchers and Du}{1995}]%
        {BD95}
{A. Borchers} {and} {D.-Z. Du}. 1995.
\newblock \showarticletitle{{The $k$-Steiner ratio in graphs}}. In {\em Proc.\
  of STOC\,1995}. ACM, 641--649.
\newblock


\bibitem[\protect\citeauthoryear{Byrka, Grandoni, Rothvo{\ss}, and
  Sanit{\`a}}{Byrka et~al\mbox{.}}{2013}]%
        {BGRS13}
{J. Byrka}, {F. Grandoni}, {T. Rothvo{\ss}}, {and} {L. Sanit{\`a}}. 2013.
\newblock \showarticletitle{{Steiner Tree Approximation via Iterative
  Randomized Rounding}}.
\newblock {\it J. ACM} {60}, 1 (2013), 6:1--33.
\newblock


\bibitem[\protect\citeauthoryear{Chakrabarty, K{\"o}nemann, and
  Pritchard}{Chakrabarty et~al\mbox{.}}{2010a}]%
        {CKP10:IPCO}
{D. Chakrabarty}, {J. K{\"o}nemann}, {and} {D. Pritchard}. 2010a.
\newblock \showarticletitle{{Hypergraphic LP Relaxations for Steiner Trees}}.
  In {\em Proc.\ of IPCO\,2010} {\em (LNCS)}, Vol. 6080. Springer, 383--396.
\newblock


\bibitem[\protect\citeauthoryear{Chakrabarty, K{\"o}nemann, and
  Pritchard}{Chakrabarty et~al\mbox{.}}{2010b}]%
        {CKP10}
{D. Chakrabarty}, {J. K{\"o}nemann}, {and} {D. Pritchard}. 2010b.
\newblock \showarticletitle{{Integrality Gap of the Hypergraphic Relaxation of
  Steiner Trees: a short proof of a 1.55 upper bound}}.
\newblock {\em arXiv\/}  {1006.2249} (2010).
\newblock


\bibitem[\protect\citeauthoryear{Chimani, Mutzel, and Zey}{Chimani
  et~al\mbox{.}}{2012}]%
        {CMZ12}
{M. Chimani}, {P. Mutzel}, {and} {B. Zey}. 2012.
\newblock \showarticletitle{{Improved Steiner Tree Algorithms for Bounded
  Treewidth}}.
\newblock {\em Journal of Discrete Algorithms\/}  {16} (2012), 67--78.
\newblock


\bibitem[\protect\citeauthoryear{Chimani and Woste}{Chimani and Woste}{2011}]%
        {CW11}
{M. Chimani} {and} {M. Woste}. 2011.
\newblock \showarticletitle{{Contraction-Based Steiner Tree Approximations in
  Practice}}. In {\em ISAAC} {\em (Lecture Notes in Computer Science)}, {Takao
  Asano}, {Shin-Ichi Nakano}, {Yoshio Okamoto}, {and} {Osamu Watanabe} (Eds.),
  Vol. 7074. Springer, 40--49.
\newblock
\showISBNx{978-3-642-25590-8}


\bibitem[\protect\citeauthoryear{Chleb\'{\i}k and
  Chleb\'{\i}kov{\'a}}{Chleb\'{\i}k and Chleb\'{\i}kov{\'a}}{2008}]%
        {CC08}
{M. Chleb\'{\i}k} {and} {J. Chleb\'{\i}kov{\'a}}. 2008.
\newblock \showarticletitle{{The Steiner tree problem on graphs:
  Inapproximability results}}.
\newblock {\em Theoretical Computer Science\/} {406}, 3 (2008), 207--214.
\newblock


\bibitem[\protect\citeauthoryear{Ciebiera, Godlewski, Sankowski, and
  Wygocki}{Ciebiera et~al\mbox{.}}{2014}]%
        {CGSW14}
{K. Ciebiera}, {P. Godlewski}, {P. Sankowski}, {and} {P. Wygocki}. 2014.
\newblock \showarticletitle{{Approximation Algorithms for Steiner Tree Problems
  Based on Universal Solution Frameworks}}.
\newblock {\em CoRR\/}  {abs/1410.7534} (2014).
\newblock
\showURL{%
\url{http://arxiv.org/abs/1410.7534}}


\bibitem[\protect\citeauthoryear{DIMACS}{DIMACS}{2014}]%
        {DIMACS14Bounds}
DIMACS 2014.
\newblock {11th DIMACS Challenge}.
\newblock \url{http://dimacs11.cs.princeton.edu}.   (2014).
\newblock
\newblock
\shownote{Bounds: 9/12/14.}


\bibitem[\protect\citeauthoryear{Dreyfus and Wagner}{Dreyfus and
  Wagner}{1972}]%
        {DW72}
{S.~E. Dreyfus} {and} {R.~A. Wagner}. 1972.
\newblock \showarticletitle{{The Steiner Problem in Graphs}}.
\newblock {\em Networks\/}  {1} (1972), 195--207.
\newblock


\bibitem[\protect\citeauthoryear{Duin and Volgenant}{Duin and
  Volgenant}{1989}]%
        {DV89}
{C.~W. Duin} {and} {A. Volgenant}. 1989.
\newblock \showarticletitle{{Reduction tests for the steiner problem in
  graphs}}.
\newblock {\em Networks\/} {19}, 5 (1989), 549--567.
\newblock


\bibitem[\protect\citeauthoryear{Fafianie, Bodlaender, and Nederlof}{Fafianie
  et~al\mbox{.}}{2013}]%
        {FBN13}
{S. Fafianie}, {H.~L. Bodlaender}, {and} {J. Nederlof}. 2013.
\newblock \showarticletitle{{Speeding Up Dynamic Programming with
  Representative Sets - An Experimental Evaluation of Algorithms for Steiner
  Tree on Tree Decompositions}}. In {\em Proc.\ of IPEC\,2013} {\em (LNCS)},
  Vol. 8246. Springer, 321--334.
\newblock


\bibitem[\protect\citeauthoryear{Fischetti, Leitner, Ljubic, Luipersbeck,
  Monaci, Resch, Salvagnin, and Sinnl}{Fischetti et~al\mbox{.}}{2014}]%
        {FLLLMRSS15}
{M. Fischetti}, {M. Leitner}, {I. Ljubic}, {M. Luipersbeck}, {M. Monaci}, {M.
  Resch}, {D. Salvagnin}, {and} {M. Sinnl}. 2014.
\newblock \showarticletitle{{Thinning out Steiner trees: a node-based model for
  uniform edge costs}}.
\newblock {\em 11th DIMACS Challenge\/} (2014).
\newblock


\bibitem[\protect\citeauthoryear{Gilbert and Pollak}{Gilbert and
  Pollak}{1968}]%
        {GP68}
{E.~N. Gilbert} {and} {H.~O. Pollak}. 1968.
\newblock \showarticletitle{{Steiner Minimal Trees}}.
\newblock {\it SIAM J. Appl. Math.}  {16} (1968), 1--20.
\newblock


\bibitem[\protect\citeauthoryear{Goemans, Olver, Rothvo{\ss}, and
  Zenklusen}{Goemans et~al\mbox{.}}{2012}]%
        {GORZ12}
{M.~X. Goemans}, {N. Olver}, {T. Rothvo{\ss}}, {and} {R. Zenklusen}. 2012.
\newblock \showarticletitle{{Matroids and Integrality Gaps for Hypergraphic
  Steiner Tree Relaxations}}. In {\em Proc.\ of STOC\,2012}. ACM, 1161--1176.
\newblock
\showISBNx{978-1-4503-1245-5}


\bibitem[\protect\citeauthoryear{Goemans and Williamson}{Goemans and
  Williamson}{1995}]%
        {GW95}
{M.~X. Goemans} {and} {D.~P. Williamson}. 1995.
\newblock \showarticletitle{{A General Approximation Technique for Constrained
  Forest Problems}}.
\newblock {\it SIAM J. Comput.} {24}, 2 (1995), 296--317.
\newblock


\bibitem[\protect\citeauthoryear{Gröpl, Hougardy, Nierhoff, and
  Prömel}{Gröpl et~al\mbox{.}}{2001}]%
        {GHNP01}
{C. Gröpl}, {S. Hougardy}, {T. Nierhoff}, {and} {H.~J. Prömel}. 2001.
\newblock \showarticletitle{{Approximation Algorithms for the Steiner Tree
  Problem in Graphs}}. In {\em {Steiner Trees in Industry}} {\em (Combinatorial
  Optimization)}. Springer, 235--279.
\newblock


\bibitem[\protect\citeauthoryear{Gubichev and Neumann}{Gubichev and
  Neumann}{2012}]%
        {GN12}
{A. Gubichev} {and} {T. Neumann}. 2012.
\newblock \showarticletitle{{Fast Approximation of Steiner Trees in Large
  Graphs}}. In {\em Proc.\ of CIKM\,2012}. ACM, 1497--1501.
\newblock


\bibitem[\protect\citeauthoryear{Harel and Tarjan}{Harel and Tarjan}{1984}]%
        {HT84}
{D. Harel} {and} {R.~E. Tarjan}. 1984.
\newblock \showarticletitle{{Fast Algorithms for Finding Nearest Common
  Ancestors}}.
\newblock {\it SIAM J. Comput.} {13}, 2 (1984), 338--355.
\newblock


\bibitem[\protect\citeauthoryear{Hougardy and Pr{\"o}mel}{Hougardy and
  Pr{\"o}mel}{1999}]%
        {HP99}
{S. Hougardy} {and} {H.~J. Pr{\"o}mel}. 1999.
\newblock \showarticletitle{{A 1.598 Approximation Algorithm for the Steiner
  Problem in Graphs}}. In {\em Proc.\ of SODA\,1999}. ACM/SIAM, 448--453.
\newblock


\bibitem[\protect\citeauthoryear{Hougardy, Silvanus, and Vygen}{Hougardy
  et~al\mbox{.}}{2014}]%
        {HSV14}
{S. Hougardy}, {J. Silvanus}, {and} {J. Vygen}. 2014.
\newblock \showarticletitle{{Dijkstra meets Steiner: a fast exact goal-oriented
  Steiner tree algorithm}}.
\newblock {\em arXiv\/}  {1406.0492} (2014).
\newblock


\bibitem[\protect\citeauthoryear{Hwang, Richards, and Winter}{Hwang
  et~al\mbox{.}}{1992}]%
        {HRW92}
{F.K. Hwang}, {D.S. Richards}, {and} {P. Winter}. 1992.
\newblock {\em {The Steiner Tree Problem}}. Annals of Discrete Mathematics,
  Vol.~51.
\newblock Elsevier Science.
\newblock
\showISBNx{9780080867939}


\bibitem[\protect\citeauthoryear{Kahng and Robins}{Kahng and Robins}{1992}]%
        {KR92}
{A.~B. Kahng} {and} {G. Robins}. 1992.
\newblock \showarticletitle{{A New Class of Iterative Steiner Tree Heuristics
  with Good Performance}}.
\newblock {\em IEEE Transactions on CAD of Integrated Circuits and Systems\/}
  {11}, 7 (1992), 893--902.
\newblock


\bibitem[\protect\citeauthoryear{Karp}{Karp}{1972}]%
        {K72}
{R.~M. Karp}. 1972.
\newblock \showarticletitle{{Reducibility among Combinatorial Problems}}. In
  {\em Proc.\ of Complexity of Computer Computations 1972}. Plenum Press,
  85--103.
\newblock


\bibitem[\protect\citeauthoryear{Karpinski and Zelikovsky}{Karpinski and
  Zelikovsky}{1995}]%
        {KZ95}
{M. Karpinski} {and} {A. Zelikovsky}. 1995.
\newblock \showarticletitle{{1.757 and 1.267 - Approximation Algorithms for the
  Network and Rectilinear Steiner Tree Problems}}.
\newblock {\em ECCC\/} {2}, 3 (1995).
\newblock


\bibitem[\protect\citeauthoryear{Koch, Martin, and {Vo\ss}}{Koch
  et~al\mbox{.}}{2000}]%
        {KMV00}
{T. Koch}, {A. Martin}, {and} {S. {Vo\ss}}. 2000.
\newblock {\em {SteinLib: An Updated Library on Steiner Tree Problems in
  Graphs}}.
\newblock {T}echnical {R}eport ZIB-Report 00-37. Konrad-Zuse-Zentrum {f\"ur}
  Informationstechnik Berlin.
\newblock
\showURL{%
\url{http://elib.zib.de/steinlib}}


\bibitem[\protect\citeauthoryear{K{\"o}nemann, Pritchard, and Tan}{K{\"o}nemann
  et~al\mbox{.}}{2011}]%
        {KPT11}
{J. K{\"o}nemann}, {D. Pritchard}, {and} {K. Tan}. 2011.
\newblock \showarticletitle{{A Partition-Based Relaxation for Steiner Trees}}.
\newblock {\em Mathematical Programming\/} {127}, 2 (2011), 345--370.
\newblock


\bibitem[\protect\citeauthoryear{Kou, Markowsky, and Berman}{Kou
  et~al\mbox{.}}{1981}]%
        {KMB81}
{L.~T. Kou}, {G. Markowsky}, {and} {L. Berman}. 1981.
\newblock \showarticletitle{{A Fast Algorithm for Steiner Trees}}.
\newblock {\em Acta Informatica\/}  {15} (1981), 141--145.
\newblock


\bibitem[\protect\citeauthoryear{Leitner, Ljubic, Luipersbeck, and
  Resch}{Leitner et~al\mbox{.}}{2014}]%
        {LLLR14}
{M. Leitner}, {I. Ljubic}, {M. Luipersbeck}, {and} {M. Resch}. 2014.
\newblock \showarticletitle{{A Partition-Based Heuristic for the Steiner Tree
  Problem in Large Graphs}}. In {\em Proc.\ of Hybrid Metaheuristics 2014} {\em
  (LNCS)}, Vol. 8457. 56--70.
\newblock


\bibitem[\protect\citeauthoryear{Mehlhorn}{Mehlhorn}{1988}]%
        {M88}
{K. Mehlhorn}. 1988.
\newblock \showarticletitle{{A Faster Approximation Algorithm for the Steiner
  Problem in Graphs}}.
\newblock {\it Inform. Process. Lett.} {27}, 3 (1988), 125--128.
\newblock


\bibitem[\protect\citeauthoryear{Papadimitriou and Yannakakis}{Papadimitriou
  and Yannakakis}{1988}]%
        {PY91}
{C.~H. Papadimitriou} {and} {M. Yannakakis}. 1988.
\newblock \showarticletitle{{Optimization, Approximation, and Complexity
  Classes}}. In {\em Proc.\ of STOC\,1988}. ACM, 229--234.
\newblock


\bibitem[\protect\citeauthoryear{{Poggi de Arag{\~a}o}, Ribeiro, Uchoa, and
  Werneck}{{Poggi de Arag{\~a}o} et~al\mbox{.}}{2001}]%
        {ARUW01}
{M. {Poggi de Arag{\~a}o}}, {C.~C. Ribeiro}, {E. Uchoa}, {and} {R.~F. Werneck}.
  2001.
\newblock \showarticletitle{{Hybrid Local Search for the Steiner Problem in
  Graphs}}. In {\em Proc.\ of MIC\,2001}. 429--433.
\newblock


\bibitem[\protect\citeauthoryear{{Poggi de Arag{\~a}o} and Werneck}{{Poggi de
  Arag{\~a}o} and Werneck}{2002}]%
        {AW02}
{M. {Poggi de Arag{\~a}o}} {and} {R.~F. Werneck}. 2002.
\newblock \showarticletitle{{On the Implementation of MST-Based Heuristics for
  the Steiner Problem in Graphs}}. In {\em Proc.\ of ALENEX\,2002} {\em
  (LNCS)}, Vol. 2409. Springer, 1--15.
\newblock


\bibitem[\protect\citeauthoryear{Polzin and {Vahdati Daneshmand}}{Polzin and
  {Vahdati Daneshmand}}{2001}]%
        {PV01}
{T. Polzin} {and} {S. {Vahdati Daneshmand}}. 2001.
\newblock \showarticletitle{{Improved Algorithms for the Steiner Problem in
  Networks}}.
\newblock {\em Discrete Applied Mathematics\/} {112}, 1--3 (2001), 263--300.
\newblock


\bibitem[\protect\citeauthoryear{Polzin and {Vahdati Daneshmand}}{Polzin and
  {Vahdati Daneshmand}}{2002}]%
        {PV02}
{T. Polzin} {and} {S. {Vahdati Daneshmand}}. 2002.
\newblock \showarticletitle{{Extending Reduction Techniques for the Steiner
  Tree Problem}}. In {\em Proc.\ of ESA\,2002} {\em (LNCS)}, Vol. 2461.
  Springer, 795--807.
\newblock


\bibitem[\protect\citeauthoryear{Polzin and {Vahdati Daneshmand}}{Polzin and
  {Vahdati Daneshmand}}{2003}]%
        {PV03}
{T. Polzin} {and} {S. {Vahdati Daneshmand}}. 2003.
\newblock \showarticletitle{{On Steiner trees and minimum spanning trees in
  hypergraphs}}.
\newblock {\em Operations Research Letters\/} {31}, 1 (2003), 12--20.
\newblock


\bibitem[\protect\citeauthoryear{Pr{\"o}mel and Steger}{Pr{\"o}mel and
  Steger}{1997}]%
        {PS97}
{H.~J. Pr{\"o}mel} {and} {A. Steger}. 1997.
\newblock \showarticletitle{{RNC-Approximation Algorithms for the Steiner
  Problem}}. In {\em Proc.\ of STACS\,1997} {\em (LNCS)}, Vol. 1200. Springer,
  559--570.
\newblock


\bibitem[\protect\citeauthoryear{Rayward-Smith}{Rayward-Smith}{1983}]%
        {R83}
{V.~J. Rayward-Smith}. 1983.
\newblock \showarticletitle{{The Computation of Nearly Minimal Steiner Trees in
  Graphs}}.
\newblock {\em Int.\ Journal of Math.\ Education in Science and Technology\/}
  {14}, 1 (1983), 15--23.
\newblock


\bibitem[\protect\citeauthoryear{Robins and Zelikovsky}{Robins and
  Zelikovsky}{2005}]%
        {RZ05}
{G. Robins} {and} {A. Zelikovsky}. 2005.
\newblock \showarticletitle{{Tighter Bounds for Graph Steiner Tree
  Approximation}}.
\newblock {\em SIAM Journal on Discrete Mathematics\/} {19}, 1 (2005),
  122--134.
\newblock


\bibitem[\protect\citeauthoryear{Shore, Foulds, and Gibbons}{Shore
  et~al\mbox{.}}{1982}]%
        {SFG82}
{M.~L. Shore}, {Leslie~R. Foulds}, {and} {P.~B. Gibbons}. 1982.
\newblock \showarticletitle{{An Algorithm for the Steiner Problem in Graphs}}.
\newblock {\em Networks\/} {12}, 3 (1982), 323--333.
\newblock


\bibitem[\protect\citeauthoryear{Takahashi and Matsuyama}{Takahashi and
  Matsuyama}{1980}]%
        {TM80}
{H. Takahashi} {and} {A. Matsuyama}. 1980.
\newblock \showarticletitle{{An Approximate Solution for the Steiner Problem in
  Graphs}}.
\newblock {\em Mathematica Japonica\/}  {24} (1980), 573--577.
\newblock


\bibitem[\protect\citeauthoryear{Vygen}{Vygen}{2011}]%
        {V11}
{J. Vygen}. 2011.
\newblock \showarticletitle{{Faster algorithm for optimum Steiner trees}}.
\newblock {\it Inform. Process. Lett.}  {111} (2011), 1075--1079.
\newblock


\bibitem[\protect\citeauthoryear{Warme}{Warme}{1998}]%
        {W98}
{D.~M. Warme}. 1998.
\newblock {\em {Spanning Trees in Hypergraphs with Application to Steiner
  Trees}}.
\newblock Ph.D. Dissertation. University of Virginia.
\newblock


\bibitem[\protect\citeauthoryear{Wong}{Wong}{1984}]%
        {W84}
{R. Wong}. 1984.
\newblock \showarticletitle{{A Dual Ascent Approach for Steiner Tree Problems
  on a Directed Graph}}.
\newblock {\em Mathematical Programming\/}  {28} (1984), 271--287.
\newblock
Issue 3.
\showISSN{0025-5610}


\bibitem[\protect\citeauthoryear{Zelikovsky}{Zelikovsky}{1992}]%
        {Z92}
{A. Zelikovsky}. 1992.
\newblock \showarticletitle{{An 11/6-Approximation Algorithm for the Steiner
  Problem on Graphs}}.
\newblock {\em Annals of Discrete Mathematics\/}  {51} (1992), 351--354.
\newblock


\bibitem[\protect\citeauthoryear{Zelikovsky}{Zelikovsky}{1993a}]%
        {Z93:IPL}
{A. Zelikovsky}. 1993a.
\newblock \showarticletitle{{A Faster Approximation Algorithm for the Steiner
  Tree Problem in Graphs}}.
\newblock {\it Inform. Process. Lett.} {46}, 2 (1993), 79--83.
\newblock


\bibitem[\protect\citeauthoryear{Zelikovsky}{Zelikovsky}{1993b}]%
        {Z93:AI}
{A. Zelikovsky}. 1993b.
\newblock \showarticletitle{{An 11/6-Approximation Algorithm for the Network
  Steiner Problem}}.
\newblock {\em Algorithmica\/} {9}, 5 (1993), 463--470.
\newblock


\bibitem[\protect\citeauthoryear{Zelikovsky}{Zelikovsky}{1995}]%
        {Z95}
{A. Zelikovsky}. 1995.
\newblock {\em {Better Approximation Bounds for the Network and Euclidean
  Steiner Tree Problems}}.
\newblock {T}echnical {R}eport CS-96-06. University of Virginia.
\newblock


\end{thebibliography}

\end{document}